\let\realbfseries=\bfseries
\def\bfseries{\realbfseries\boldmath}
\newtheorem{theorem}{Theorem}[section]
\newtheorem{corollary}[theorem]{Corollary}
\newtheorem{lemma}[theorem]{Lemma}
\def\defn#1{\textbf{\textit{#1}}}
\def\degree{\operatorname{degree}}
\def\weirdSAT{N3P-3SAT-2P-E1N}
\def\noEweirdSAT{N3P-3SAT-2P-1N}
\def\weakweirdSAT{N3P-3SAT-3-1N}
\def\pospathSAT{literal-matching \weirdSAT}
\def\PospathSAT{Literal-matching \weirdSAT}
\definecolor{tile}{gray}{0.8}
\definecolor{tilelabel}{gray}{0.9}
\def\growbox#1{%
  \setbox1=\hbox{#1}%
  \ifdim \wd1 < 5pt
    \hbox to 5pt{\hfil\box1\hfil}%
  \else
    \box1%
  \fi}
\def\shrink#1{\hbox to 0pt{\hss$#1$\hss}}
\def\TILE#1#2#3#4{\ensuremath{\raisebox{0.2ex}{$\scriptstyle #1$} {\overset{\smash[b]{\scriptstyle #2}}{\underset{\smash[t]{\scriptstyle #4}}{\raisebox{0ex}{\scalebox{1.0}{$\square$}}}}} \raisebox{0.2ex}{$\scriptstyle #3$}}}
\def\TRITILEEQ#1#2#3{\ensuremath{\raisebox{1ex}{$\scriptstyle #1\!$} {\underset{\smash[t]{\scriptstyle #3}}{\raisebox{-0.5ex}{\scalebox{1.2}{$\triangle$}}}} \raisebox{1ex}{$\!\scriptstyle #2$}}}
\def\TRITILER#1#2#3{\ensuremath{\raisebox{1ex}{$\scriptstyle #1\!\!$} {\underset{\smash[t]{\scriptstyle #3}}{\raisebox{-0.5ex}{\scalebox{1.2}{$\angle$\hspace{-1.5pt}\reflectbox{$\angle$}}}} \raisebox{1ex}{$\!\!\scriptstyle #2$}}}}
\theoremstyle{remark}
\theoremstyle{definition}
\newtheorem{definition}{Definition}[section]
\makeatletter \hypersetup{pdftitle={\@title}}}
  \def\times{×}
  \def\in{∈}
\title{Edge Matching with Inequalities, Triangles, \\ Unknown Shape, and Two Players}
\author{%
  Jeffrey Bosboom%
    \thanks{MIT Computer Science and Artificial Intelligence Laboratory,
      32 Vassar St., Cambridge, MA 02139, USA,
      \protect\url{jbosboom@csail.mit.edu},\protect\url{{czchen,ikdc,scompton,mcoulomb,edemaine,mdemaine,ivanaf,dylanhen,achester,ineq,whu2704,ezzluo,lillianz}@mit.edu},
      \protect\url{charlotte_z_chen@yahoo.com}}
\and
  Charlotte Chen%
    \footnotemark[1]
\and
  Lily Chung%
    \footnotemark[1]
\and
  Spencer Compton%
    \footnotemark[1]
\and
  Michael Coulombe%
    \footnotemark[1]
\and
  Erik~D.~Demaine%
    \footnotemark[1]
\and
  Martin L. Demaine%
    \footnotemark[1]
\and
  Ivan Tadeu Ferreira Antunes Filho%
    \footnotemark[1]
\and
  Dylan~Hendrickson%
    \footnotemark[1]
\and
  Adam Hesterberg%
    \footnotemark[1]
\and
  Calvin Hsu%
    \footnotemark[1]
\and
  William Hu%
    \footnotemark[1]
\and
  Oliver Korten%
    \smash{\thanks{Department of Computer Science,
      Tufts University, Medford, MA, USA,
      \protect\url{oliverkorten123@gmail.com}}}
\and
  Zhezheng Luo%
    \footnotemark[1]
\and
  Lillian Zhang%
    \footnotemark[1]
}
\date{}
 \gdef\xxxmark{%
   \expandafter\ifx\csname @mpargs\endcsname\relax %
     \expandafter\ifx\csname @captype\endcsname\relax %
       \marginpar{xxx}%
     \else
       xxx %
     \fi
   \else
     xxx %
   \fi}
 \gdef\xxx{\@ifnextchar[\xxx@lab\xxx@nolab}
 \long\gdef\xxx@lab[#1]#2{\textbf{[\xxxmark #2 ---{\sc #1}]}}
 \long\gdef\xxx@nolab#1{\textbf{[\xxxmark #1]}}
 \long\gdef\xxx@lab[#1]#2{}\long\gdef\xxx@nolab#1{}%
\begin{document}

\maketitle

\begin{abstract}
  We analyze the computational complexity of several new variants of
  edge-matching puzzles.
  First we analyze inequality (instead of equality) constraints
  between adjacent tiles,
  proving the problem NP-complete for strict inequalities
  but polynomial-time solvable for nonstrict inequalities.
  Second we analyze three types of triangular edge matching,
  of which one is polynomial-time solvable and the other two are NP-complete;
  all three are \#P-complete.
  Third we analyze the case where no target shape is specified, and we merely
  want to place the (square) tiles so that edges match (exactly);
  this problem is NP-complete.
  Fourth we consider four 2-player games based on $1 \times n$ edge matching,
  all four of which are PSPACE-complete.
  Most of our NP-hardness reductions are parsimonious, newly proving \#P and
  ASP-completeness for, e.g.,\ $1 \times n$ edge matching.
  Along the way, we prove \#P- and ASP-completeness of
  planar 3-regular directed Hamiltonicity; we give linear-time algorithms
  to find antidirected and forbidden-transition Eulerian paths;
  and we characterize the complexity of new partizan variants of the
  Geography game on graphs.
\end{abstract}

\section{Introduction}

In an \defn{edge-matching puzzle}, we are given several tiles
(usually identical in shape), where each tile has a label on each edge,
and the goal is to place all the tiles (usually via translation and rotation)
into a given shape such that shared edges between adjacent tiles have
compatible labels.
In \defn{unsigned} edge matching, labels are compatible if they are identical ($a$ matches $a$ and nothing else); in \defn{signed} edge matching, labels have signs (e.g., $+a$ and $-a$), and two labels are compatible if they are negations of each other ($+a$ matches $-a$ and nothing else, and vice versa).
Physical edge-matching puzzles date back to the 1890s \cite{Thurston-patent};
perhaps the most famous example is \emph{Eternity II} which offered a US\$2,000,000 prize for a solution before 2011 \cite{eternity2-wiki}.

\subsection{Previous Work}
The complexity of edge-matching puzzles has been studied since 1966
\cite{Berger-1966}.
The most relevant work to this paper is from two past JCDCG conferences.
In 2007, Demaine and Demaine \cite{demaine2007jigsaw} proved that signed and unsigned edge-matching square-tile puzzles are NP-complete and equivalent to both jigsaw puzzles and polyomino packing puzzles.
In 2016, Bosboom et al.~\cite{bosboom2017even} proved that signed and unsigned edge-matching square-tile puzzles are NP-complete even when the target shape is a $1 \times n$ rectangle, and furthermore hard to approximate within some constant factor.
Our work on $1 \times n$ triangle edge-matching puzzles is inspired by an open problem proposed in the latter paper.

\definecolor{header}{rgb}{0.29,0,0.51}
\definecolor{gray}{rgb}{0.85,0.85,0.85}
\def\header#1{\textcolor{white}{\textbf{#1}}}
\begin{table}[h]
\centering
\small
\setlength\tabcolsep{0.9\tabcolsep}
\begin{tabular}{lclll}%
\rowcolor{header}
\header{Compatibility} & \header{Board} & \header{Tiles} & \header{Players} & \header{Complexity} \\
$<$ & $1 \times n$ & square & 1-player & NP-complete\\
$\leq$ & $m \times n$ & square & 1-player & P \\
\rowcolor{gray}
Signed/unsigned & $1 \times n$ & square & 1-player & NP/\#P-complete, (2-)ASP-hard* \\
Signed/unsigned & $1 \times n$ & equilateral triangle & 1-player & NP/\#P-complete, (2-)ASP-hard* \\
Signed/unsigned & $1\times n$ & right triangle (hypotenuse contact) & 1-player & NP/\#P-complete, (2-)ASP-hard* \\
Signed/unsigned & ${\sqrt{2} \over 2} \times n$ & right triangle (leg contact) & 1-player & $\in{}$P, \#P-complete \\
\rowcolor{gray}
Signed/unsigned & $O(1) \times n$ & square/triangular with $O(1)$ colors & 1-player & $\in{}$P\\
Signed/unsigned & shapeless & square & 1-player & NP/\#P/ASP-complete \\
\rowcolor{gray}
Signed/unsigned & $1 \times n$ & square & \llap{impartial }2-player & PSPACE-complete\\
\rowcolor{gray}
Signed/unsigned & $1 \times n$ & square & \llap{partizan }2-player & PSPACE-complete\\
\end{tabular}
\vspace*{-1ex}
\caption{Our results on edge-matching puzzles.  *Our proof gives ASP-completeness for $1 \times n$ edge matching only when at least one boundary edge is colored; otherwise, each solution can be rotated 180 degrees to form another valid solution, so we get 2-ASP-hardness (NP-hard to find a third solution given two).}
\label{tab:results}
\end{table}

\subsection{Our Results: Edge Matching}

Table~\ref{tab:results} summarizes our results in edge matching,
described in more detail below.

\paragraph{Inequality edge matching.}
Our most involved result is an NP-hardness proof for a new ``$<$'' compatibility condition, where edge labels are numbers, horizontally adjacent edges match if the left edge's number is less than the right edge's number, and vertically adjacent edges match if the top edge's number is less than the bottom edge's number.
Figure~\ref{2x3 less than} shows an example.
In Section~\ref{sec:ineq}, we prove NP-hardness of $<$-compatible $1 \times n$ edge matching by reduction from another new NP-hard problem, Interval-Pair Cover.
The $\leq$-compatibility condition (where equal numbers also match, or we assume all labels are distinct) turns out to be substantially easier: even rectangular puzzles turn out to be always solvable, and we give a polynomial-time algorithm.%

\begin{figure}
  \centering
  \includegraphics[scale=2]{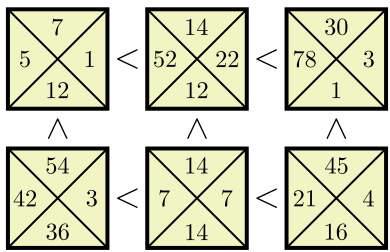}
  \caption{A solved $2 \times 3$ \hbox{$<$-compatible} edge-matching puzzle.
    This solution is valid because $1 < 52$ and $22 < 78$ in the top row,
    $3 < 7$ and $7 < 21$ in the bottom row, and $12 < 54$, $12 < 14$, and
    $1 < 45$ in the columns.}
  \label{2x3 less than}
\end{figure}

\paragraph{ASP/\#P-completeness for $1 \times n$ edge matching.}
In Section~\ref{sec:ASP},
we analyze edge matching for the first time from the perspective of the number
of solutions to an instance,
which is relevant to constructing puzzles with unique solutions.
Specifically, we prove ASP-completeness for signed and unsigned $1 \times n$ edge-matching puzzles when the left boundary edge is colored (to prevent trivial $180^\circ$ rotation of solutions), and 2-ASP-hardness and \#P-completeness even if the boundary is colorless.

Recall the following definitions of FNP, ASP-complete, and \#P-complete.
\defn{FNP} is a variant of NP that actually specifies the valid
certificates/solutions for an instance
(instead of just requiring that they exist); that is,
an FNP problem is a relation between instances and polynomial-length
certificates/solutions that can be checked in polynomial time.
For edge matching problems, the certificate we consider is a valid placement
of the given tiles within the given shape.
An FNP problem $\Pi$ is \defn{ASP-complete} \cite{Yato-Seta-2003}
if every FNP problem has a polynomial-time \defn{parsimonious reduction}
(preserving the number of solutions) to $\Pi$ along with a polynomial-time
bijection between solutions of the two problems.
ASP-completeness implies that the \defn{$k$-ASP} version of the FNP problem ---
given an instance and $k$ solutions to it, decide whether there is another
solution --- is NP-hard \cite{Yato-Seta-2003}.
An FNP problem is \defn{\#P-complete} \cite{Valiant-1979-perm}
if counting the number of solutions
is as hard as counting the number of solutions to any FNP problem,
which is implied by a reduction that is \defn{$c$-monious}, i.e.,
that multiplies the number of solutions by a computable factor $c \geq 1$.%
\footnote{This terminology naturally generalizes ``parsimonious''
  ($c=1$), and was introduced in an MIT class in 2014 \cite{6.892-L10}.}
Our reductions to $1 \times n$ edge matching are the first to be parsimonious
or, when global $180^\circ$ rotation is allowed, 2-monious.

\paragraph{Triangular edge matching.}
The conclusion of \cite{bosboom2017even} claimed that the paper's results extended to equilateral-triangle edge matching, but the proposed simulation of squares by triangles is incorrect because it constrains the orientation of the simulated squares.
In Section~\ref{equilateral-triangle-hardness}, we extend our $1 \times n$ parsimonious proof to obtain NP/\#P/ASP-completeness for signed and unsigned edge matching with equilateral triangles, with or without reflection.

For right isosceles triangles, there are two natural ``$1 \times n$'' arrangements.
For clarity, we assume the legs of the triangles have length 1.
If we still want a height-$1$ tiling, then length-$\sqrt 2$ hypotenuses are forced to match, so matching is NP/\#P/ASP-complete by simulation of squares.
But if we ask for a height-$\sqrt 2 \over 2$ tiling, so only legs match, we show in Section~\ref{right triangles} that, surprisingly, both signed and unsigned edge matching can be solved in polynomial time using an algorithm based on Eulerian paths.
Nonetheless, the latter problems are still \#P-complete.

\paragraph{Shapeless edge matching.}
In Section~\ref{sec:shapeless}, we prove that square-tile edge-matching puzzles remain NP/\#P/\allowbreak ASP-complete when the goal is to connect all tiles in any (unspecified) single connected shape, with either signed or unsigned compatibility.
For \#P- and ASP-completeness, we need to give some tile a fixed position in the plane (translation and rotation) to make the number of solutions finite.
The proof builds a unique spiral frame that effectively forces a $1 \times n$ edge-matching puzzle with a fixed left boundary color.

\paragraph{2-player edge matching.}
In Section~\ref{sec:2-player}, we consider natural 2-player variants of $1 \times n$ edge-matching puzzles, where the left boundary edge of the rectangle has a prespecified color, players alternate placing a tile in the leftmost empty cell that matches the edge color to the left, and the first player unable to move loses (\defn{normal play}).
We prove PSPACE-completeness for four variants of this problem: both signed and unsigned square-tile edge matching, and both when players can play any remaining tile from a shared pool (\defn{impartial}) and when players play from separate pools of tiles (\defn{partizan}).

\subsection{Our Results: Not Edge Matching}

Along the way to proving our results on edge matching, we derive other
results of possible independent interest in graph algorithms/complexity.

\paragraph{Hamiltonicity parsimony.}

In Section~\ref{sec:Ham ASP}, we prove \#P- and ASP-completeness of the Hamiltonian cycle problem in planar 3-regular \emph{directed} graphs, by modifying the clause gadget of Plesn\'ik's \emph{NP}-hardness proof \cite{plesnik} and parsimoniously reducing from 1-in-3SAT instead of 3SAT.
Previous work showed the analogous \emph{undirected} problem ASP-complete (and \#P-complete) in planar graphs of maximum degree 3 \cite{Seta02thecomplexities}.
We also prove \#P- and ASP-completeness of the Hamiltonian path problem with specified start and end vertices in planar 3-regular directed graphs.

\paragraph{Antidirected Eulerian paths.}

In Section~\ref{sec:alternating euler}, we characterize when a directed graph
admits an \defn{antidirected Eulerian path}
\cite{Berman-1978,Fleischner-1990,Zitnik-1996},
that is, a path%
\footnote{Throughout this paper, we follow the half-standard terminology
  that paths and cycles are allowed to repeat vertices and/or edges
  (though we will rarely allow repeated edges).
  In a different half-standard terminology, these notions are called
  ``walks/trails'' and ``circuits''.
  If a path or cycle makes no such repetitions, it is called \defn{simple}.}
that alternates between going forward and going backward along
directed edges and visits every edge (in either direction) exactly once.
(Such directed graphs are called \defn{aneulerian}
\cite{Berman-1978,Fleischner-1990,Zitnik-1996}.)
Specifically, we show how to reduce this problem to finding an Eulerian path
in a modified graph, enabling solution in linear time.
Although antidirected Eulerian paths were introduced over 50 years ago
\cite{Berman-1978},
their existence does not seem to have been characterized before our work
and a recent independent discovery \cite{alternating-euler-independent}.

\paragraph{Forbidden-transition Eulerian paths.}
In Section~\ref{sec:forbidden euler}, we give linear-time algorithms
to find Eulerian paths or antidirected Eulerian paths
when certain monochromatic edge-to-edge transitions are forbidden,
extending past work by Kotzig \cite{Kotzig1968} to be algorithmic
(and to the antidirected case).
Specifically, each vertex can define a partition of its incident edges
into groups, and the problem forbids the Eulerian path from passing through
the vertex via two edges from the same group.

\paragraph{Partizan Geography game.}

We introduce eight new \emph{partizan} variants of Geography where the two
players have different available moves, and characterize their complexity.
Specifically, in vertex-partizan geography, vertices have two different colors,
and each player can only move to vertices of their color; while in
edge-partizan geography, edges have two different colors, and each player
can only move along edges of their color.
We can consider either variant for both Vertex and Edge Geography
(where vertices and edges, respectively, cannot be repeated by either player),
and in directed or undirected graphs, resulting in eight possible variants.
Table~\ref{tab:partizan-geography} summarizes our results
from Section~\ref{sec:par},
which prove every variant either polynomial or PSPACE-complete.

\begin{table}
\centering
\begin{tabular}{lrll}
  \rowcolor{header}
  \header{Graph} & \header{Partizan} & \header{Geography} & \header{Complexity} \\
  \hline 
  undirected & vertex   & vertex           & polynomial \\
\rowcolor{gray}
  undirected & vertex   & edge             & polynomial \\
  undirected & edge     & vertex           & PSPACE-complete \\
\rowcolor{gray}
  undirected & edge     & edge             & PSPACE-complete \\
  directed   & vertex   & vertex           & PSPACE-complete \\
\rowcolor{gray}
  directed   & vertex   & edge             & PSPACE-complete \\
  directed   & edge     & vertex           & PSPACE-complete \\
\rowcolor{gray}
  directed   & edge     & edge             & PSPACE-complete
\end{tabular}
\caption{Partizan geography results}
\label{tab:partizan-geography}
\end{table}

\section{Edge Matching with Inequalities}
\label{sec:ineq}

In this section, we analyze the complexity of the following problems:

\begin{definition}
\defn{$m \times n$ $<$-compatible edge matching} is the following problem:

\textbf{Instance:} $m n$ unit-square tiles, where each tile is defined by four numbers,
one for each side. We use \TILE{a}{b}{c}{d} to represent a unit-square tile with numbers $a,b,c,d$.

\textbf{Question:} Can the $m n$ tiles cover an $m \times n$ rectangle such that
\begin{itemize}
\item for every two horizontally adjacent tiles, the left tile's right number is strictly smaller than the right tile's left number; and
\item for every two vertically adjacent tiles, the top tile's bottom number is strictly smaller than the bottom tile's top number?
\end{itemize}

The related problem \defn{$\leq$-compatible edge matching} is defined similarly, except that we do not require strict inequalities among the numbers. 
\end{definition}

\subsection{Polynomial-Time Algorithm for $\leq$-Compatible Edge Matching}

\begin{theorem}
$m \times n$ $\leq$-compatible edge-matching puzzles are always solvable and a solution can be found in $O(mn\log(mn))$ time.
\end{theorem}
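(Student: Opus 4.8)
The plan is to produce a solution by direct construction rather than search, with a single sort of the $mn$ tiles accounting for the entire $O(mn\log(mn))$ running time and the remaining work being a linear-time layout. The guiding intuition is that the relaxation from strict to non-strict inequalities is exactly what removes the rigidity exploited in the NP-hardness reduction: once equalities are permitted there is enough slack to lay every tile down monotonically and never get stuck, so the right statement to aim for is that a sorted arrangement is \emph{always} valid.

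Concretely, I would attach to each tile $t$ a single scalar key $\kappa(t)$ computed from its four side numbers, sort the tiles by $\kappa$, and place them in boustrophedon (snake) order: fill the bottom row left to right, the next row right to left, and so on. The point of snake order is that tiles that are consecutive in the sorted sequence are always \emph{orthogonally adjacent} in the grid — horizontally adjacent within a row, and vertically adjacent at the turn between two rows — so it suffices to certify the required inequality for each consecutive pair rather than for an arbitrary pattern of neighbors. Sorting then does all the work: if $\kappa$ is chosen correctly, $\kappa(t)\le\kappa(t')$ for consecutive $t,t'$ forces the correct side comparison across their shared edge.

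The step I expect to be the main obstacle is that the relevant sides, and even which tile plays which role, change with the adjacency type and the traversal direction: inside a left-to-right row the constraint compares the left tile's right number with the right tile's left number, inside a right-to-left row the two tiles swap roles, and at a row boundary the constraint instead compares the lower tile's top number with the upper tile's bottom number. The heart of the argument is therefore a normalization lemma showing that one monotone condition on $\kappa$ simultaneously implies the correct inequality in all of these cases, with non-strictness used to absorb ties (in particular at the turns, where only the $\le$ relaxation can guarantee satisfiability). Once that lemma is in hand the conclusion is immediate: sort in $O(mn\log(mn))$, lay the tiles out in snake order in $O(mn)$, and apply the lemma at each of the $O(mn)$ interior edges to certify that the constructed placement is valid, which in particular establishes that a solution always exists.
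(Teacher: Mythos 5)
There is a genuine gap, and it is the central logical step: you claim that because consecutive tiles in your sorted sequence are orthogonally adjacent in the snake layout, it suffices to certify the required inequality for each consecutive pair. This is a non sequitur. The puzzle demands compatibility across \emph{every} shared edge, and in an $m \times n$ snake layout almost all vertical adjacencies are between tiles that are far apart in the sequence: the tiles at positions $(r,c)$ and $(r+1,c)$ are consecutive only at the single column where the snake turns between those two rows. Your plan says nothing about these $\Theta(mn)$ non-consecutive adjacent pairs, and monotonicity of a single key $\kappa$ along the sequence gives no control over them. Even the consecutive horizontal pairs are problematic: in a reversed row the later tile of the sequence sits to the \emph{left}, so the required inequality points the opposite way relative to $\kappa$, and the ``normalization lemma'' that is supposed to resolve this --- and to actually define $\kappa$ and the rotation of each tile --- is precisely the part you leave unproved. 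As written, the proposal is a plan whose load-bearing lemma is missing, resting on a reduction (adjacent $\Rightarrow$ only consecutive pairs matter) that is false.

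The paper's proof avoids both problems with a different decomposition. Each tile is first rotated into a canonical orientation in which the left number is at least the right number and the top number is at least the bottom number (one of the four rotations always achieves this, since the two opposite-side pairs can be ordered independently by choosing among the four rotations). The tiles are then sorted by bottom number to assign them to rows in row-major order, and re-sorted within each row by right number. The within-tile inequalities then let one chain across \emph{arbitrary} adjacent pairs, not just consecutive ones: rows are contiguous blocks of the global sort, so every tile in one row has its bottom number comparable to that of every tile in the next row, which combined with the top-versus-bottom inequality gives vertical compatibility for all vertically adjacent pairs regardless of the within-row reshuffling; horizontal compatibility follows analogously from the within-row sort and the left-versus-right inequality. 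If you want to rescue a single-key snake construction, you would need an argument of this kind covering all adjacencies, at which point you have essentially rediscovered the two-level sort.
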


\begin{proof}
Rotate each tile \TILE{A}{B}{C}{D} such that $A \ge C$ and $B \ge D$.  Then sort the tiles in ascending order by $D$ and place them in the board in row-major order.  Because $B \ge D$, sorting by $D$ ensures all tiles are vertically $\leq$-compatible.  Then sort the tiles in each row in ascending order by $C$.  Because $A \ge C$, sorting by $C$ ensures all tiles in the row are horizontally $\leq$-compatible.  Being both vertically and horizontally $\leq$-compatible, this is a compatible tiling.  This algorithm runs in $O(mn\log(mn))$ time from the sorting.
\end{proof}

The following special cases of the $m \times n$ $<$-compatible edge-matching
puzzles are tractable:

\begin{corollary}
$m \times n$ $<$-compatible edge-matching puzzles in which all edge labels are distinct are always solvable and a solution can be found in polynomial time.
\end{corollary}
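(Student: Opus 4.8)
The plan is to deduce the corollary from the preceding theorem with essentially no extra work, by exploiting the fact that distinct labels make the strict and nonstrict compatibility conditions coincide. First I would observe that if all edge labels are distinct, then any two edges carry different numbers; in particular, for any two tiles placed in adjacent cells, the two numbers meeting along their shared edge are distinct. Hence for such a placement the relation ``left number $\le$ right number'' holds if and only if ``left number $<$ right number'' holds, and likewise in the vertical direction. Consequently a placement of the given tiles is $\le$-compatible if and only if it is $<$-compatible.

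Given this equivalence, I would simply run the algorithm from the theorem on the same tile set. It rotates each tile \TILE{A}{B}{C}{D} so that $A \ge C$ and $B \ge D$ --- and since all labels are distinct these automatically become the strict inequalities $A > C$ and $B > D$ --- and then sorts by $D$ and by $C$ and places the tiles in row-major order, producing a $\le$-compatible tiling in $O(mn\log(mn))$ time. By the equivalence above, this tiling is in fact $<$-compatible, which establishes both that a solution always exists and that one can be found in polynomial time.

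The only point that needs care is the claim that the $\le$-solution contains no ties, i.e.\ that every nonstrict inequality it satisfies across an interior edge is actually strict. This is immediate from distinctness: a tie would require two distinct edges to bear the same number, contradicting the hypothesis. There is thus no real obstacle beyond stating this observation precisely, and the corollary follows directly from the theorem.
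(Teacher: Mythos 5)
Your proof is correct and matches the paper's intent exactly: the corollary is stated without proof precisely because, with all labels distinct, $\le$-compatibility and $<$-compatibility coincide, so the algorithm of the preceding theorem applies verbatim. Nothing further is needed.
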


\begin{theorem}
$1 \times n$ $<$-compatible edge-matching puzzles in which every tile has at least one pair of parallel sides with unequal labels are always solvable and a solution can be found in polynomial time. 
\end{theorem}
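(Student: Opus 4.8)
The plan is to exploit the fact that a $1 \times n$ board is a single row, so the only constraints are between horizontally adjacent tiles: if tile $i$ sits immediately left of tile $i+1$, we need the right label $R_i$ of tile $i$ to be strictly less than the left label $L_{i+1}$ of tile $i+1$. The top and bottom edges of each cell, and the extreme left and right edges of the whole row, face the boundary and are unconstrained. As in the $\leq$-compatible theorem, we are free to rotate each tile and to permute the tiles arbitrarily.

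The first and crucial step is an \emph{orientation observation}: every tile can be rotated so that its right label is strictly smaller than its left label. Indeed, the hypothesis supplies a pair of parallel sides carrying unequal labels $p \neq q$; rotate that pair into the horizontal direction, and if necessary apply a $180^\circ$ rotation so that $\max(p,q)$ is on the left and $\min(p,q)$ is on the right. Call a tile placed this way \emph{decreasing}, writing $L_i > R_i$ for its left and right labels. This is exactly where the hypothesis is used and is exactly sufficient: a tile whose two opposite pairs are both equal can only be placed with $L = R$, so it could never be made decreasing (and a row of such identically-labeled tiles would indeed be unsolvable).

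After making every tile decreasing, I would sort the tiles in nondecreasing order of their right label $R$ and place them left to right in that order. For any adjacent pair, the sort gives $R_i \le R_{i+1}$, while the decreasing property of tile $i+1$ gives $R_{i+1} < L_{i+1}$; chaining these yields $R_i \le R_{i+1} < L_{i+1}$, so the required strict inequality $R_i < L_{i+1}$ holds at every interior interface. Hence the placement is valid, and the whole procedure is dominated by the sort, running in $O(n \log n)$ time.

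I do not anticipate a real obstacle: the entire argument rests on the single orientation observation of the second step, after which a transitivity-style comparison closes the proof. The only points requiring a sentence of care are confirming that the unequal-pair hypothesis is precisely what is needed to obtain a decreasing orientation, and noting that the unconstrained leftmost-left and rightmost-right edges cause no difficulty, since only the interior interfaces carry constraints and sorting by $R$ satisfies all of them automatically.
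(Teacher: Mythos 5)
Your proposal is correct and follows essentially the same approach as the paper: rotate each tile using the unequal parallel pair so that its left label strictly exceeds its right label, then sort and chain the sorting inequality with the per-tile strict inequality. The only (immaterial) difference is that you sort by the right label and invoke the decreasing property of tile $i+1$, whereas the paper sorts by the left label and invokes the decreasing property of tile $i$.
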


\begin{proof}
Rotate each tile \TILE{A}{B}{C}{D} such that $A > C$.  If there are two pairs of unequal parallel sides, then choose arbitrarily.  Now sort all tiles in ascending order by $A$, breaking ties arbitrarily, and place them in the board in row-major order.  Let $A_i$ and $C_i$ be the left and right numbers of tile $i$.  From sorting, we know that $A_i \leq A_{i+1}$, and from our rotation of the tiles, we know that $C_i < A_i$.  Composing the inequalities gives $C_i < A_{i+1}$, which is the $<$-compatibility condition, so this is a compatible tiling.
\end{proof}

\subsection{NP-hardness of $1 \times n$ $<$-Compatible Edge Matching}

To show NP-hardness of $<$-compatible edge matching, we start from the known
NP-hard problem {\weirdSAT} \cite{ding2011minimum} defined in
Section~\ref{sec:weirdSAT}. In Section
\ref{FirstNPReduction}, we reduce {\weirdSAT} to a novel variant
{\pospathSAT}. In Section \ref{SecondNPReduction}, we reduce {\pospathSAT} to
a new problem, Interval-Pair Cover, which implies NP-hardness of $1 \times n$
$<$-compatible edge matching.

\subsubsection{\weirdSAT}
\label{sec:weirdSAT}

Our starting point is the following variant of SAT
(named to follow notation from \cite{ivan-thesis}):

\begin{definition}
An instance of \defn{\weirdSAT} is an instance of 3SAT, consisting of
$n$ variables $x_1, x_2, \dots, x_n$ and
$m$ clauses each with at most three literals, where each literal is of the form
$x_i$ (positive) or $\neg x_i$ (negative),
satisfying the following constraints:
\begin{enumerate}
\item \defn{N3P}: Every clause has at least one negative literal
      (i.e., no clause has three positive literals).
\item \defn{2P}: Every variable $x_i$ appears in at most two positive literals $x_i$.
\item \defn{E1N}: Every variable $x_i$ appears in exactly one negative literal $\neg x_i$.
\end{enumerate}
\end{definition}

Ding et al.~\cite{ding2011minimum} proved that {\weirdSAT} is NP-complete.
In fact, they proved NP-completeness of a slightly more general problem,
{\weakweirdSAT}, which constrains each variable to appear in at most three
literals, at most one of which is negative.
But any variable with zero negative occurrences can be eliminated
(setting it to true), so by repeated application of this process,
we attain the E1N property.
Because each variable appears in at most three literals, at most two of them
are positive, so we also have the 2P property.
Thus we reduce {\weakweirdSAT} to {\weirdSAT}.

\subsubsection{Reduction from {\weirdSAT} to {\pospathSAT}} \label{FirstNPReduction}

Define the \defn{shared-literal graph} of a
3SAT instance to have one vertex for each clause,
and connect two clauses by an edge for each literal they share;
see Figure~\ref{shared literal graph}.
For a {\weirdSAT} instance, the shared-literal graph has two additional properties.
By the E1N constraint, every edge corresponds to a
shared \emph{positive} literal.
By the 2P property, the shared-literal graph has maximum degree~$2$.
We will show that we can in fact reduce the shared-literal graph to maximum degree~$1$.

\begin{figure}
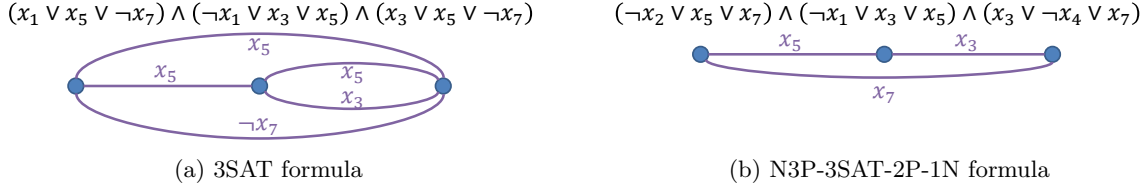

  \centering
  \subcaptionbox{3SAT formula}{\includegraphics[page=1,width=0.45\linewidth,trim=0 45 0 0,clip]{ppt_figs}}\hfil\hfil
  \subcaptionbox{\label{shared literal graph weird}{\noEweirdSAT} formula}{\includegraphics[page=2,width=0.45\linewidth,trim=0 45 0 0,clip]{ppt_figs}}
  \caption{Shared-literal graph: two examples.}
  \label{shared literal graph}
\end{figure}

\begin{definition}
A \defn{\pospathSAT} instance is an instance of {\weirdSAT} whose shared-literal graph
is a matching.
\end{definition}

\begin{theorem} \label{pospathSAT thm}
  {\PospathSAT} is NP-complete.
\end{theorem}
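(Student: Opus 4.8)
The plan is to prove membership in NP (a satisfying assignment is a polynomial-length certificate checkable in polynomial time) and then to establish NP-hardness by a polynomial-time reduction from {\weirdSAT}, which the excerpt already cites as NP-complete. Starting from a {\weirdSAT} instance $\varphi$, whose shared-literal graph $G$ has maximum degree~$2$ (all edges positive by E1N, degree bounded by 2P), the goal is to produce an equisatisfiable instance whose shared-literal graph is a matching --- i.e.\ a {\pospathSAT} instance --- while preserving the three structural constraints N3P, 2P, and E1N throughout.

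The core tool is a local gadget that ``cuts'' a single shared positive literal. Suppose a positive variable $x$ is shared by clauses $C$ and $C'$ (by 2P these are the only two clauses containing $x$ positively, so this pair is exactly one edge of $G$). To remove the sharing on the $C$ side, I would introduce a fresh variable $z$, replace the occurrence of $x$ in $C$ by $z$, and add the single new clause $\neg z \vee x$ (equivalently $z \to x$). Equisatisfiability is a short two-case argument: given an old satisfying assignment, setting $z := x$ works; conversely, if the rewritten clause is satisfied only through $z$, then $z \to x$ forces $x$ true, so the original clause is satisfied. Crucially, only the one-directional implication $z \to x$ is needed rather than full equivalence $z \leftrightarrow x$: this avoids introducing any new negative occurrence of the existing variable $x$, so $x$'s occurrence counts are unchanged, $z$ receives exactly one positive and one negative occurrence, and the new clause carries a negative literal $\neg z$. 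Hence N3P, 2P, and E1N are all preserved. The effect on $G$ is that $x$ now appears positively only in $C'$ and in the new (2-literal) clause, relocating that edge to a fresh degree-$1$ vertex and decreasing the degree of $C$.

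Finally I would apply this gadget globally. Since $G$ has maximum degree~$2$, it decomposes into disjoint paths and cycles; orienting each component consistently makes every clause the tail of at most one edge, and cutting each edge at its tail turns $G$ into a matching. Two checks finish the argument: distinct edges use distinct shared variables (again by 2P), so no two newly added clauses share a literal and spawn fresh edges; and each original clause retains at most its single incoming edge. The whole transformation adds one variable and one clause per edge of $G$, so it runs in polynomial time. I expect the main obstacle to be exactly this gadget design: the naive equivalence gadget $z \leftrightarrow x$ would also require the clause $\neg x \vee z$ and thus a second negative occurrence of $x$, violating E1N, so the key insight is that the asymmetric, implication-only gadget still suffices for satisfiability while respecting the delicate ``exactly one negative occurrence'' constraint.
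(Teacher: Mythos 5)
Your proposal is correct and takes essentially the same route as the paper's proof: the same orientation of the maximum-degree-$2$ shared-literal graph into paths and cycles, and the same one-directional helper gadget (fresh variable $z$, new clause $\neg z \vee x$) that cuts each edge while preserving N3P, 2P, and E1N, with the same equisatisfiability argument. The only cosmetic difference is that you rewrite the tail clause of each oriented edge where the paper rewrites the head; everything else, including the key observation that the implication-only gadget avoids a second negative occurrence of $x$, is identical.
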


\begin{proof}
Trivially, {\pospathSAT} $\in$ NP.
We reduce {\weirdSAT} to {\pospathSAT} to show {\pospathSAT} is NP-hard. 
Refer to Figure~\ref{pospathSAT}.

\begin{figure}
  \centering
  \subcaptionbox{Oriented {\weirdSAT} instance from Figure~\ref{shared literal graph weird}}{\includegraphics[page=3,width=0.45\linewidth]{ppt_figs}}\hfil\hfil
  \subcaptionbox{Reduced {\pospathSAT} instance}{\includegraphics[page=4,width=0.45\linewidth]{ppt_figs}}
  \caption{Reduction from {\weirdSAT} to {\pospathSAT} of Theorem~\ref{pospathSAT thm}.}
  \label{pospathSAT}
\end{figure}

First we orient the shared-literal graph to have maximum indegree and maximum
outdegree~$1$.
Because the shared-literal graph is maximum degree~$2$,
every connected component is either a path or a cycle.
Direct each path from one end to the other, and direct each cycle cyclically.

Reduction:
For each edge $(c,d)$ in the directed shared-literal graph,
corresponding to a shared literal $x_i$,
replace the occurrence of $x_i$ in $d$ with a new helper variable~$h_i$.
Additionally, create a new helper clause $\neg h_i \lor x_i$,
i.e., $h_i \Rightarrow x_i$.

This reduction conserves occurrences of the original (nonhelper) variables, and each helper variable appears positively once (replacing some $x_i$ in an original clause) and negatively once (in the helper clause), so the transformed formula is still {\weirdSAT}.

The transformed formula is satisfiable under an augmented truth assignment $\sigma_{X, H} = \sigma_X \cup \sigma_H$ if and only if the original formula is satisfiable under $\sigma_X$.  If $h_i$ satisfies an original clause (by being true), the helper clause enforces that $x_i$ is also true.  If $x_i$ is false, the helper clause enforces that $h_i$ is also false, and so cannot satisfy the original clause it is a member of.  Thus if $\sigma_{X,H}$ satisfies the transformed formula, $\sigma_X$ satisfies the original formula.  Variable $h_i$ can be false when $x_i$ is true, but as $x_i$ already satisfies $h_i$'s helper clause and $h_i$ always appears positively in its original clause, such an assignment cannot satisfy more clauses than if $h_i$ were true.  Thus if $\sigma_X$ satisfies the original formula, $\sigma_{X,H} = \sigma_X \cup \{h_i = \sigma_X(x_i)\}$ satisfies the transformed formula.

After replacing the occurrence of $x_i$ in clause $d$, each edge $(c,d)$ in the original formula's directed shared-literal graph corresponds to an edge between $c$ and the helper clause containing $x_i$ in the transformed formula's shared-literal graph, so original clauses have degree at most 1.  Each helper variable appears only once in each polarity, so helper variables do not give rise to edges in the shared-literal graph.  Thus all helper clauses have degree 1.  Then the transformed formula's shared-literal graph has maximum degree~1, and so is a matching.
\end{proof}

\subsubsection{Reduction from {\pospathSAT} to Interval-Pair Cover} \label{SecondNPReduction}
To begin, we define a new problem Interval-Pair Cover;
refer to Figure~\ref{interval pair cover}.

\begin{definition}
\defn{Interval-pair cover} is the following problem:

\textbf{Instance:} A universe $U = \{1,2,\dots,n\}$
and $m$ pairs of closed intervals $([a_i,b_i], [c_i,d_i])$ for $\newline i=1,2,\dots,m$. Here $a_i,b_i,c_i,d_i \in U$, $a_i \leq b_i$, and $c_i \leq d_i$.

\textbf{Question:} Is there a choice of one interval from each pair such that
every $i \in U$ is covered by some chosen interval?
\end{definition}

\begin{figure}
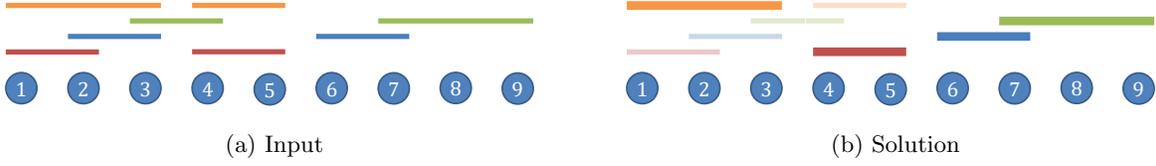

  \centering
  \subcaptionbox{Input}{\includegraphics[page=8,width=0.45\linewidth,trim=0 70 0 0,clip]{ppt_figs}}\hfil\hfil
  \subcaptionbox{Solution}{\includegraphics[page=9,width=0.45\linewidth,trim=0 70 0 0,clip]{ppt_figs}}
  \caption{Interval-Pair Cover: sample input and solution.
    The two intervals in the same pair are colored the same
    and share the same $y$ coordinate.}
  \label{interval pair cover}
\end{figure}

\begin{theorem} \label{IPC thm}
  Interval-Pair Cover is NP-complete, even when every interval pair
  $([a_j,b_j], [c_j,d_j])$ satisfies $a_j=b_j$ and $d_j - c_j \in \{0,1\}$.
\end{theorem}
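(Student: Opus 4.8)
The plan is to reduce from \pospathSAT\ (shown NP-complete in Theorem~\ref{pospathSAT thm}) to interval-pair cover, exploiting the restricted structure of the source problem. First I would set up the correspondence between the two problems: the universe $U$ will encode the clauses of the {\pospathSAT} instance (roughly one element of $U$ per clause, so ``covering'' an element means ``satisfying'' that clause), and each interval pair will encode a variable's two possible truth settings. The key idea is that choosing the first interval of a pair corresponds to setting the variable true, and choosing the second corresponds to setting it false (or vice versa); the chosen interval should cover exactly those clause-elements that the corresponding literal satisfies under that setting. A satisfying assignment then becomes a choice of one interval per pair that covers all of $U$, and conversely.

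Next I would verify that this encoding can be made to respect the strong normalization promised in the statement, namely $a_j = b_j$ (so one interval in each pair is a single point) and $d_j - c_j \in \{0,1\}$ (so the other interval has length at most one). This is where the {\pospathSAT} structure pays off. Because the shared-literal graph is a matching, each (positive) literal is shared by at most two clauses, and each variable appears in at most two positive literals and exactly one negative literal (the E1N/2P constraints). I would arrange the clause-elements of $U$ so that for each variable, the clauses satisfied by its true setting and the clauses satisfied by its false setting occupy small, controlled ranges of $U$ --- ideally so that one polarity hits a single clause-position (giving a point interval $a_j = b_j$) and the other hits at most two adjacent positions (giving $d_j - c_j \in \{0,1\}$). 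The matching property is precisely what bounds the number of clauses a literal can touch, so the intervals stay short. I would likely need auxiliary single-point pairs or a careful interleaving of clause positions to force every clause-element to be coverable by exactly the literals that occur in it.

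The forward direction (satisfying assignment $\Rightarrow$ valid cover) should be routine once the encoding is fixed: a true literal's chosen interval covers its clauses, and since every clause has a satisfied literal, every element of $U$ is covered. The reverse direction (valid cover $\Rightarrow$ satisfying assignment) requires arguing that reading off the interval choices as truth values yields a consistent, satisfying assignment --- here I must ensure there is no ``cheating,'' i.e., no element of $U$ can be covered except by an interval representing a literal genuinely present in the corresponding clause. I would enforce this by making the clause-positions disjoint enough that only the intended variable's intervals reach a given clause-element.

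The hard part will be simultaneously achieving the length restrictions ($a_j=b_j$ and $d_j-c_j\le 1$) and the exactness of the clause-to-interval incidence. These two goals pull against each other: short intervals cover few clauses, but a literal may need to ``reach'' clauses that are not adjacent in any natural ordering of $U$. The matching structure of {\pospathSAT} is the crucial lever that keeps the incidence sparse, but I expect the real work to be in choosing the numbering of clauses within $U$ (possibly with padding or repetition) so that every literal's footprint collapses to the allowed interval shapes, while keeping different variables' footprints from colliding. Getting this layout right, and proving its correctness in both directions, is the main obstacle.
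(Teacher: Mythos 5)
Your plan is correct and follows essentially the same route as the paper's proof: one universe element per clause, one interval pair per variable, with E1N yielding the single-point interval for the negative occurrence and 2P plus the matching property yielding a length-at-most-one interval for the (at most two) positive occurrences. The layout issue you flag as the main obstacle is in fact immediate: because the shared-literal graph of a {\pospathSAT} instance is a matching, the clauses can simply be placed at integer coordinates with each matched pair adjacent, so every variable's positive-occurrence clauses occupy consecutive integers and no padding or interleaving is required.
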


\begin{proof}
We reduce from {\pospathSAT}; refer to Figure~\ref{IPC}.
We draw the shared-literal graph on the integer line from $1$ through $n$, placing the vertices at integer coordinates and using unit-length edges.  This is always possible because the shared-literal graph is a matching.  Then we create an interval pair for each variable $x_i$.  The pair's first interval contains only the coordinate of the vertex representing the clause where $x_i$ appears negatively; by the E1N property, there is exactly one.  The pair's second interval contains only the coordinate(s) of the vertex or vertices representing the clause(s) where $x_i$ appears positively; by the 2P property, there are at most two, and they are adjacent on the line because they share an edge in the shared-literal graph.  If $x_i$ does not appear positively, we set the second interval equal to the first interval.

\xxx{could get and use $\ge$1P property by eliminating any variables with zero positive occurrences as we do for zero negative occurrences}

\begin{figure}
  \centering
  \subcaptionbox{{\PospathSAT} instance}{\includegraphics[page=4,width=0.45\linewidth]{ppt_figs}}\hfil\hfil
  \subcaptionbox{Short drawing}{\includegraphics[page=5,width=0.45\linewidth]{ppt_figs}}

  \bigskip

  \subcaptionbox{Interval-pair cover instance}{\includegraphics[page=6,width=0.45\linewidth]{ppt_figs}}\hfil\hfil
  \subcaptionbox{Solution}{\includegraphics[page=7,width=0.45\linewidth]{ppt_figs}}
  \caption{Reduction from {\pospathSAT} to Interval-Pair Cover of Theorem~\ref{IPC thm}.}
  \label{IPC}
\end{figure}

The produced Interval-Pair Cover instance has a solution if and only if the input {\pospathSAT} instance is satisfiable.  Given a satisfying truth assignment, from the interval pair corresponding to variable $x_i$, we choose the first interval if $x_i$ is assigned false and the second interval if $x_i$ is assigned true.  Each chosen interval covers the coordinate(s) of the clause vertices satisfied by $x_i$, so if the truth assignment satisfies the formula, the chosen intervals cover all integers in the universe.  Given a complete interval cover, we assign true to $x_i$ if the second interval was chosen from its corresponding pair and false if the first interval was chosen.  By the same interval-variable correspondence, if the intervals cover all integers in the universe, the constructed truth assignment satisfies the formula.
\end{proof}

\subsubsection{Reduction from Interval-Pair Cover to $1 \times n$ $<$-Compatible Edge Matching}

\begin{theorem}
  $1 \times n$ $<$-compatible edge matching is NP-complete.
\end{theorem}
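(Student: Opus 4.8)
Membership in NP is immediate: a candidate solution is a permutation of the tiles together with an orientation (a rotation, and possibly a reflection) for each, which has polynomial size and whose $<$-compatibility is checkable in linear time. The real work is the hardness reduction, which I would carry out from the restricted interval-pair cover of Theorem~\ref{IPC thm}, where every pair $([a_j,b_j],[c_j,d_j])$ has $a_j=b_j$ (a single point) and $d_j-c_j\in\{0,1\}$ (a point or two consecutive points).

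The plan is to lay the universe $\{1,\dots,n\}$ out along the axis of edge labels and read a solved $1\times N$ puzzle left to right as a sequence whose boundary labels strictly increase. I would give each universe point $i$ a private band of label values (scaling by a large constant so the bands are well separated, leaving room for the offsets below), and then build two kinds of tiles. First, one \emph{interval tile} per pair $j$, with exactly two useful orientations: one presenting labels that occupy the band of the singleton $a_j$, the other presenting labels that occupy the band of $c_j$ and, when $d_j=c_j+1$, also the adjacent band of $c_j+1$ (a length-$1$ second interval being realized by a flexible tile whose left and right labels straddle the two consecutive bands). Second, \emph{filler tiles} that can occupy the label-gaps between successive chosen intervals, so the whole row can be completed into a single strictly increasing chain; the leftmost boundary is pinned with the minimum label to force the start of the chain. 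Choosing an orientation of interval tile $j$ then corresponds to choosing one interval from pair $j$, and a tile occupying the band of point $i$ corresponds to covering $i$.

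Correctness has two directions. From an interval cover, I would choose each interval tile's orientation according to its chosen interval, sort all the resulting footprints along the label axis, and insert filler tiles into the remaining gaps; every band is then either consumed in increasing order by a covering tile or bridged by a filler, so strict $<$-compatibility holds throughout. Conversely, from a solved puzzle I would read off each interval tile's orientation as its interval choice; because the chain strictly increases and each point's private band can be crossed only by a tile that occupies (hence covers) it, every point is covered, yielding an interval cover.

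The main obstacle is reconciling the two mismatched notions of ``use'': interval-pair cover asks each point to be covered \emph{at least once}, whereas an edge-matching solution must use \emph{every} tile \emph{exactly once} in a strictly increasing chain that forbids equal adjacent boundary labels. Over-covered points and leftover tile capacity must still be completable into a valid chain, which is the job of the filler tiles; but I must simultaneously guarantee that fillers cannot by themselves cover a point's band (otherwise spurious solutions arise) and that no two interval tiles can claim the same band. Making these value bands, offsets, and filler ranges mutually consistent — so that the correspondence is exact in both directions while the instance stays polynomial — is the crux, and the restricted form $a_j=b_j$, $d_j-c_j\in\{0,1\}$ guaranteed by Theorem~\ref{IPC thm} is exactly what keeps this bookkeeping manageable.
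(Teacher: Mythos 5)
There is a genuine gap, and it is exactly the one you flag as ``the crux'' without resolving: your construction insists that a solved row reads as \emph{a single strictly increasing chain} of labels. In such a chain the label axis is traversed monotonically, so each point's band is crossed at most once and can be claimed by at most one tile. But interval-pair cover requires you to place \emph{all} $m$ interval tiles and to choose one interval from \emph{every} pair, and distinct chosen intervals routinely cover the same point. Two interval tiles that both want to ``occupy the band of $i$'' cannot both do so in a monotone chain unless a band admits several non-crossing sub-ranges --- and once it does, your reverse direction collapses, because a band can then be traversed by tiles (fillers or narrowly-placed interval tiles) that do not actually certify coverage of $i$. Filler tiles cannot repair this: they must be numerous and flexible enough to complete every chain arising from every cover, yet never able to bridge a band on their own. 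You correctly identify this tension but give no construction that resolves it, so the proposal is a plan with its central step missing.

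The paper's reduction escapes the bind by making the boundary-label sequence deliberately \emph{non}-monotone. It creates \emph{two} copies of an element tile $\TILE{i}{i}{i}{i}$ for each universe point $i$, and for each pair $([a_j,b_j],[c_j,d_j])$ a single tile whose two opposite-side label pairs are $(a_j-1,\,b_j+1)$ and $(c_j-1,\,d_j+1)$; the chosen interval is whichever pair ends up horizontal. Oriented with $b_j+1$ on the left and $a_j-1$ on the right, such a tile \emph{descends} across every value in $[a_j,b_j]$. Placing the second copy of $\TILE{i}{i}{i}{i}$ (left label $i$) after the first (right label $i$) forces some intervening tile to descend back below $i$, and only an interval tile covering $i$ can do that --- this is how ``covered at least once'' is enforced. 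Over-coverage is absorbed for free, since after any descent one can re-ascend through element tiles, and surplus element-tile slack is trimmed by keeping exactly two copies. Note also that the paper's reduction works from general interval-pair cover; the restricted form of Theorem~\ref{IPC thm} is not needed. Your proposal lacks an analogue of the two-copy/descent mechanism, which is the idea that makes the ``cover at least once'' versus ``use every tile exactly once'' mismatch go away.
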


\begin{proof}
We reduce from Interval-Pair Cover.  For each integer $i$ in the Interval-Pair Cover universe $\{1, 2, \dots, n\}$, we create two copies of the element tile \TILE{i}{i}{i}{i}.  For each interval pair $([a_j, b_j], [c_j, d_j])$, we create an interval-pair tile \TILE{a_j-1}{c_j-1}{b_j{+}1}{d_j{+}1}.  The edge-matching board is $1 \times (2n+m)$, where $n$ is the size of the universe and $m$ is the number of interval pairs.

Given a solution to the produced edge-matching instance, we can construct a solution to Interval-Pair Cover by choosing each interval tile's horizontally-oriented interval (e.g., the interval $[a_j,b_j]$ for a tile oriented as \TILE{a_j-1}{c_j-1}{b_j{+}1}{d_j{+}1} or as \TILE{b_j{+}1}{d_j{+}1}{a_j-1}{c_j-1}). Suppose for contradiction that an element $i$ is uncovered by every chosen interval. Then in every placed tile whose left edge is at least $i+1$, its right edge is at least $i$, so the left edge of the next tile is at least $i+1$. In the sequence of left edges of tiles, the left edge of the tile after the first copy of \TILE{i}{i}{i}{i} is at least $i+1$, so every following left edge is at least $i+1$, leaving no place for the second copy of \TILE{i}{i}{i}{i}.

Given a solution to Interval-Pair Cover, we can construct a solution to the produced edge-matching instance. We will first describe a solution that uses extra copies of \TILE{i}{i}{i}{i}. For each chosen interval $[a_j,b_j]$, orient the tile as \TILE{b_j{+}1}{d_j{+}1}{a_j-1}{c_j-1}, and attach to its right \TILE{a_j}{a_j}{a_j}{a_j}, \ldots, \TILE{b_j}{b_j}{b_j}{b_j} to get a sequence of tiles with left edge $b_j+1$ and right edge $b_j$. Now place, for each $i \in \{1,2,\ldots,n\}$, the tile \TILE{i}{i}{i}{i}, followed by any of the above sequences of tiles with left edge $i+1$ and right edge $i$. That uses as many copies of \TILE{i}{i}{i}{i} as the number of intervals that cover $i$, plus 1, which is at least two. We can remove any \TILE{i}{i}{i}{i} and leave a valid solution, so arbitrarily removing copies until there are two copies of each \TILE{i}{i}{i}{i} left leaves a solution to the edge-matching instance.

\end{proof}

\section{$1 \times n$ Edge Matching ASP/\#P-completeness}
\label{sec:ASP}

In this section, we adapt the work of \cite{bosboom2017even} to show that $1\times n$ edge-matching puzzles are ASP- and \#P-complete.  Like \cite{bosboom2017even}, we reduce from Hamiltonian path in planar 3-regular directed graphs, which we newly prove ASP-~and~\#P-complete.

\subsection{Directed Hamiltonicity ASP/\#P-completeness}

Seta's thesis \cite{Seta02thecomplexities} proves ASP-completeness for Hamiltonicity in planar maximum-degree-3 \emph{undirected} graphs.
Here we prove the analogous result for \emph{directed} graphs:

\label{sec:Ham ASP}
\begin{theorem}
Finding Hamiltonian cycles in a planar 3-regular directed graph with maximum indegree 2 and maximum outdegree 2 is ASP-complete, and counting Hamiltonian cycles in those graphs is \#P-complete. %
\end{theorem}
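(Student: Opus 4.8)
The plan is to give a parsimonious reduction from 1-in-3SAT, which (like 3SAT) is both ASP-complete and \#P-complete, together with a polynomial-time bijection on solutions. Membership in FNP is immediate: a Hamiltonian cycle is a polynomial-size certificate checkable in linear time, so everything hinges on the reduction preserving the number of solutions exactly. The skeleton of the construction will follow Plesn\'ik's NP-hardness reduction for Hamiltonian cycle in planar 3-regular digraphs \cite{plesnik}, but with the clause gadget redesigned so that the map from satisfying assignments to Hamiltonian cycles becomes a genuine bijection rather than merely a correspondence of existence.

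The construction will use three kinds of gadgets. A \emph{variable gadget} for $x_i$ is a small directed subgraph that the Hamiltonian cycle must traverse and that admits exactly two traversals, which I interpret as $x_i = \text{true}$ and $x_i = \text{false}$. From each variable gadget I route one \emph{literal edge} to each clause in which that variable occurs, so that the chosen traversal determines, for every incident clause, whether the corresponding literal edge is consumed locally or offered to the clause. A \emph{clause gadget} for a 1-in-3 clause collects its three literal edges and must be covered by the cycle using exactly the edges of the true literals. Planarity is maintained by inserting Plesn\'ik's directed crossover (exclusive-or) gadget wherever two literal routes would intersect. Throughout, I would verify that every gadget vertex has in-degree at most $2$, out-degree at most $2$, and total degree exactly $3$ (adding degree-padding vertices at any junction that would otherwise have total degree $2$), so that the final graph is planar, 3-regular, and satisfies the stated in-/out-degree bounds.

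The crux --- and the reason for reducing from 1-in-3SAT rather than 3SAT --- is the \emph{parsimony} of the clause gadget. In a standard 3SAT Hamiltonicity reduction a clause is satisfied whenever at least one of its literals is true, and the clause gadget admits a distinct traversal for each nonempty set of true literals, so one satisfying assignment can correspond to many Hamiltonian cycles, destroying parsimony. Under the 1-in-3 promise exactly one literal per clause is true, so I can engineer the clause gadget to have a \emph{unique} Hamiltonian traversal, determined by which single literal edge is live. I expect the main obstacle to be ruling out \emph{spurious} Hamiltonian cycles --- cycles that traverse a variable gadget inconsistently, or that ``short-circuit'' a clause gadget without corresponding to any assignment --- which will require a careful case analysis of the edges forced at each gadget by the degree constraints. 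Once the bijection between 1-in-3 satisfying assignments and Hamiltonian cycles is established, the reduction multiplies the number of solutions by $1$, yielding \#P-completeness, and since the bijection is polynomial-time computable in both directions we obtain ASP-completeness; planarity, 3-regularity, and the degree-$2$ bounds follow by inspection of the gadgets. Finally, the companion statement for Hamiltonian \emph{paths} with specified start and end vertices follows by splitting a forced edge of every cycle and exposing its two endpoints as the required endpoints.
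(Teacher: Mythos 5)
Your plan matches the paper's proof essentially exactly: both start from 1-in-3SAT (the paper uses planar \emph{positive} 1-in-3SAT, which additionally lets it halve Plesn\'ik's variable gadget since no negative-literal edges are needed), keep Plesn\'ik's exclusive-or and crossover gadgets unchanged because they contribute no extra solutions, and obtain parsimony by stripping the clause gadget (the paper deletes the ``bridges'' between the hexagons and the left loop) so that exactly one Hamiltonian traversal exists per true literal. The remaining work in the paper is precisely the gadget-level case analysis you flag as the main obstacle, and the path version is likewise obtained by splitting a forced edge, so no further commentary is needed.
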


\begin{proof}
These problems are clearly in FNP and \#P respectively.  To prove hardness,
we give a parsimonious reduction from (planar) positive 1-in-3SAT,
which is known to be ASP-complete and \#P-complete
\cite{DBLP:journals/siamcomp/HuntMRS98}.%
\footnote{Our proof does not actually use the planarity of the 1-in-3SAT
  instance.  To avoid the exclusive-or crossover gadget, we would need the
  variable-clause graph to remain planar with a line through all of the
  variables and all of the clauses, a variant not known hard
  \cite{ivan-thesis}.}
Our reduction is patterned after Plesn\'ik's NP-hardness reduction from 3SAT for Hamiltonian cycle in this class of graphs \cite{plesnik}.  Plesn\'ik's reduction does not conserve the number of solutions because the clause gadget admits multiple solutions when multiple literals in the clause are satisfied (Figure~\ref{fig:plesnik-clause}).  Reducing from 1-in-3SAT and simplifying Plesn\'ik's clause gadget allows us to conserve the number of solutions, and reducing from \emph{positive} 1-in-3SAT (no negated literals) allows us to simplify the clause gadget.  Plesn\'ik's exclusive-or gadget and exclusive-or crossover gadget do not give rise to additional solutions, so they can be used as-is. %

Figure~\ref{fig:hamiltonicity-full} shows a full Hamiltonicity instance produced by our reduction, with variable gadgets on the right (heading down) and clause gadgets on the left (heading up), and variables and clauses connected by exclusive-or lines (the green lines with hollow endpoints) which may cross.  (Compare \cite[Figure 1]{plesnik}, in which Plesn\'ik has abbreviated the clause gadgets.)

\begin{figure}[!t]
  \centering
  \includegraphics[height=.85\textheight]{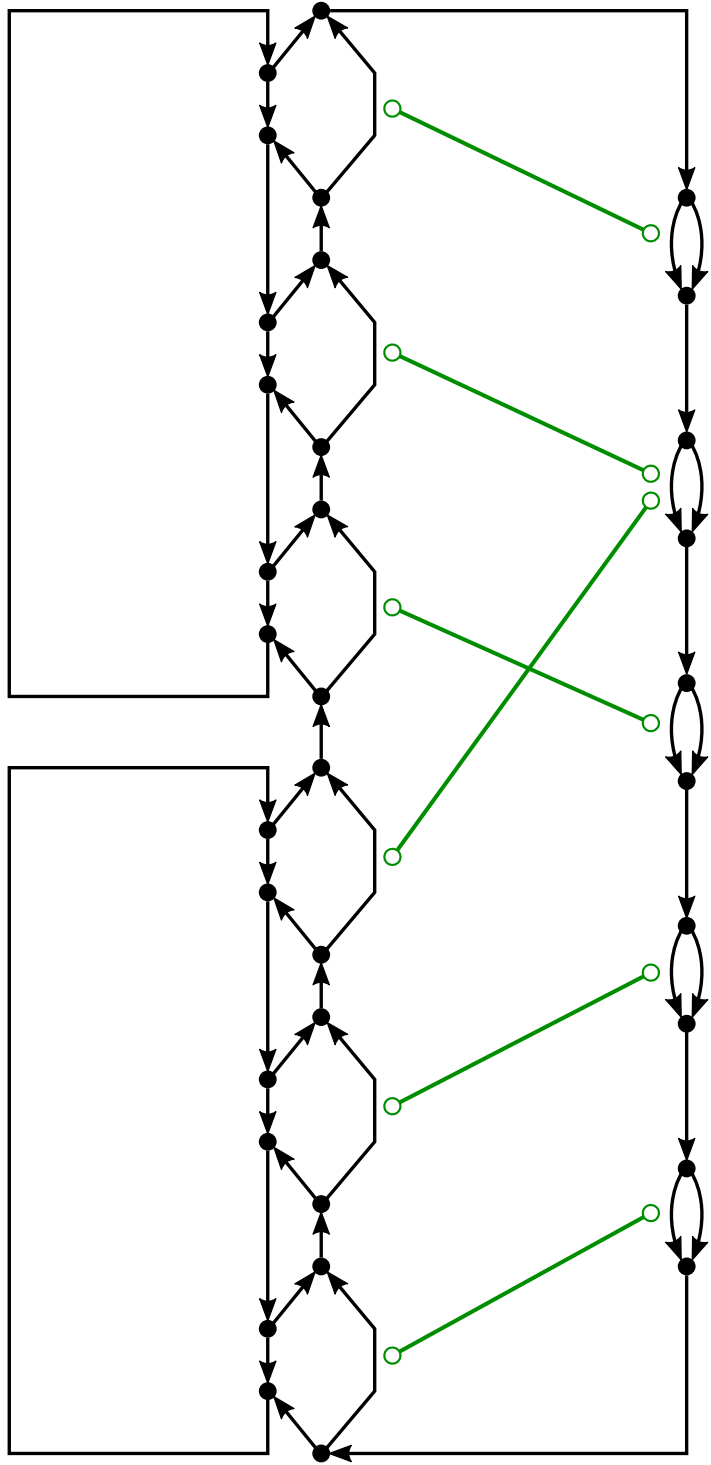}
  \caption{A full Hamiltonicity instance produced by our reduction, with variable gadgets on the right (heading down) and clause gadgets on the left (heading up).
    Variables and clauses are connected by exclusive-or lines (the green lines
    with hollow endpoints) as defined in Figure~\ref{fig:hamiltonicity-xor},
    with crossings expanded as in Figure~\ref{fig:hamiltonicity-xor-crossover}.}
  \label{fig:hamiltonicity-full}
\end{figure}

\paragraph{Exclusive-or line.}  An exclusive-or line between two edges abbreviates the pattern of additional vertices and edges shown in Figure~\ref{fig:hamiltonicity-xor}.  Traversing either of the two edges covers all of the additional vertices in exactly one way, excluding the other original edge from the cycle.  Traversing a path not corresponding to one of the original edges (e.g., from the bottom left to bottom right in Figure~\ref{fig:hamiltonicity-xor}) prevents the center four vertices from being part of any cycle (either they are uncovered, or they are the last four vertices in the path, so the path is not a cycle).  If neither of the two original edges is used, all of the additional vertices are uncovered.

\begin{figure}
\centering
\begin{minipage}{0.45\linewidth}
  \centering
  \begin{subfigure}{0.5\linewidth}
    \centering
    \includegraphics[scale=1.2]{hamiltonicity-xor-notation}
  \end{subfigure}%
  \begin{subfigure}{0.5\linewidth}
    \centering
    \includegraphics[scale=1.2]{hamiltonicity-xor-expansion}
  \end{subfigure}
  \caption{Our notation for an exclusive-or line between two edges and its expansion into additional vertices and edges.  (Redrawing of \cite[Figure 4]{plesnik}.)}
  \label{fig:hamiltonicity-xor}
\end{minipage}\hfill
\begin{minipage}{0.5\linewidth}
  \centering
  \begin{subfigure}{0.5\linewidth}
    \centering
    \includegraphics[scale=1.2]{hamiltonicity-xor-crossover-notation}
    \label{fig:hamiltonicity-xor-crossover-notation}
  \end{subfigure}%
  \begin{subfigure}{0.5\linewidth}
    \centering
    \includegraphics[scale=1.2]{hamiltonicity-xor-crossover-expansion}
    \label{fig:hamiltonicity-xor-crossover-expansion}
  \end{subfigure}
  \caption{Expansion of an exclusive-or line that crosses another exclusive-or line.  (Based on \cite[Figure 5]{plesnik}, simplified to show only two lines crossing.)}
  \label{fig:hamiltonicity-xor-crossover}
\end{minipage}
\end{figure}

\paragraph{Exclusive-or crossover.}  Exclusive-or lines connecting variable gadgets to clause gadgets may cross, necessitating the exclusive-or crossover shown in Figure~\ref{fig:hamiltonicity-xor-crossover}.  The crossover works by splitting each crossed-over edge between one pair of original edges into two edges and adding new exclusive-or lines that guarantee the parity of these paired edges is the same throughout the gadget.  For example, in Figure~\ref{fig:hamiltonicity-xor-crossover}, if the top edge is in the cycle, then the top edge of each pair is also in the cycle and the bottom edge is not in the cycle, regardless of which of the left or right edges are in the cycle.  As before, the expansion can be traversed in exactly one way for each pair of original edges traversed, and a traversal not corresponding to an original edge leaves some vertices uncovered.

\paragraph{Variable gadget.}  The variable gadget is a pair of vertices connected by a pair of parallel edges.\footnote{The graph is a simple graph, not a multigraph: If we remove any variables not used in any clauses, then for each variable, one of these edges will be replaced by an exclusive-or gadget, leaving no parallel edges.} The edge on the interior face of the variable-clause cycle is connected by exclusive-or lines to each clause in which the variable appears; including this edge in the Hamiltonian cycle represents setting the variable to true.  The other edge of the variable gadget is not connected to anything and represents setting the variable to false.  The variable gadgets are connected in sequence.

Plesn\'ik's variable gadget used two pairs of parallel edges, connected on the exterior by an exclusive-or line such that they have opposite settings, with the second pair connected to clauses where the variable appeared as a negative literal.  We reduce from planar \emph{positive} 1-in-3SAT, so all literals in our clauses are positive, making the second pair unnecessary.

\paragraph{Clause gadget.}  Our clause gadget and its three Hamiltonian paths are shown in Figure~\ref{fig:hamiltonicity-clause}.  The three rightmost edges in the clause gadget are connected by exclusive-or lines to the variable gadgets corresponding to the variables appearing in this clause.  If a variable is set to true, then the rightmost edge connected to that variable gadget cannot be in the cycle; otherwise, that rightmost edge \emph{must} be in the cycle.  If exactly one of the three variables is true, then the clause gadget can be covered in exactly one way (using one of the paths shown in Figure~\ref{fig:hamiltonicity-clause}).  If a variable is true, the path must go to the left of that hexagon, where it must enter the left loop.  If the path leaves the left loop before visiting all vertices in it, it cannot visit the top vertex of the hexagon where it entered the loop, so the left loop must be covered in its entirety.  But then the path cannot go left in any other hexagon, so the other variable must be false.  If all variables are false, the left loop is uncovered.  Thus this gadget simulates a 1-in-3SAT clause.

Our clause gadget differs from Plesn\'ik's by the deletion of the ``bridges'' between the hexagons and the left loop.  The bridges allow multiple literals to be simultaneously true, which is necessary for Plesn\'ik's reduction (from 3SAT).

\begin{figure}
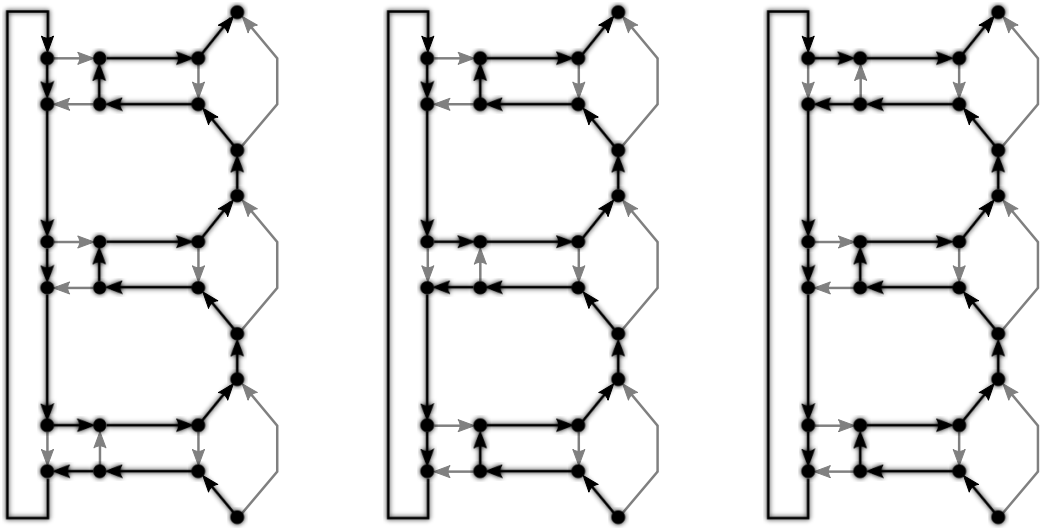
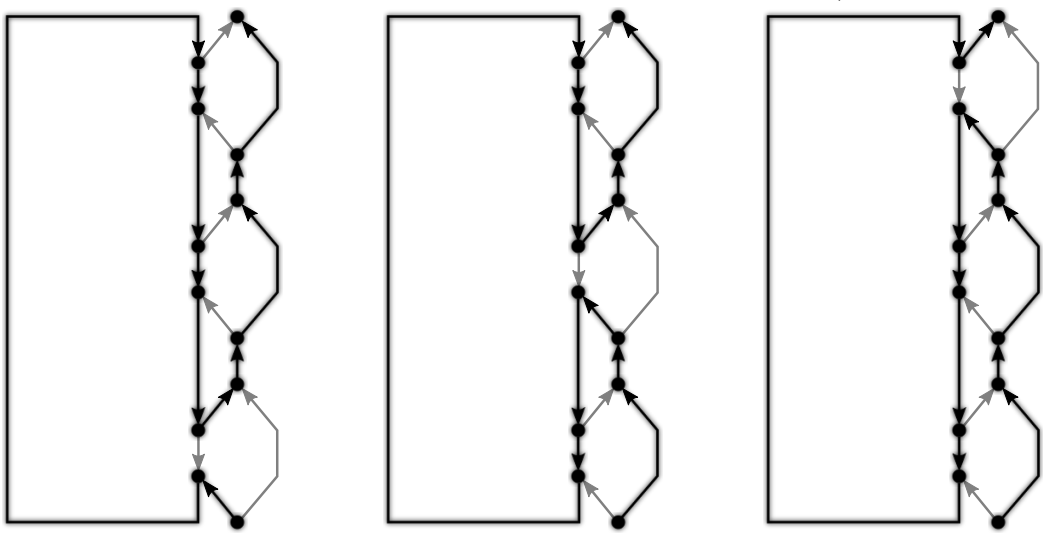

    \centering
    \begin{subfigure}{\linewidth}
        \centering
        \includegraphics[height=7cm]{plesnik-clause-111-0-stroketopath}\hfil%
        \includegraphics[height=7cm]{plesnik-clause-111-1-stroketopath}\hfil%
        \includegraphics[height=7cm]{plesnik-clause-111-2-stroketopath}\hfil%
        \caption{The three Hamiltonian paths through Plesn\'ik's clause gadget \cite[Fig. 2]{plesnik} when all three literals are true.  (The right edge of each hexagon is covered via the exclusive-or line from the variable gadget.)}
        \label{fig:plesnik-clause}
    \end{subfigure}

    \begin{subfigure}{\linewidth}
        \centering
        \includegraphics[height=7cm]{hamiltonicity-clause-001-stroketopath}\hfil%
        \includegraphics[height=7cm]{hamiltonicity-clause-010-stroketopath}\hfil%
        \includegraphics[height=7cm]{hamiltonicity-clause-100-stroketopath}\hfil%
        \caption{The three Hamiltonian paths through our modified clause gadget.  The right edges of two of the three hexagons are always used, so this is a 1-in-3SAT clause.}
        \label{fig:hamiltonicity-clause}
    \end{subfigure}
    \caption{Comparison of Plesn\'ik's clause gadget and our modified clause gadget.}
    \label{fig:ham-clause}
\end{figure}

\paragraph{Conclusion.}  Figure~\ref{fig:hamiltonicity-full} shows a full instance produced by our reduction.  For each satisfying assignment of the variables, there is one corresponding Hamiltonian cycle using the corresponding configuration of the variable gadgets and the unique satisfying path through each clause gadget.  Conversely, a satisfying assignment can be uniquely read off from each Hamiltonian cycle based on the configuration of the variable gadgets.
\end{proof}

\begin{theorem} \label{cycle-to-path}
Finding Hamiltonian paths, with or without given start vertex $s$ and/or end vertex $t$, in planar 3-regular directed graphs with maximum indegree 2 and maximum outdegree 2 is ASP-complete,
and counting Hamiltonian paths in those graphs is \#P-complete.
The same result holds when the given vertex $s$ has outdegree 1 and the given vertex $t$ has indegree 1.
\end{theorem}

\begin{proof}
We prove this result via a parsimonious reduction from Hamiltonian cycle in planar 3-regular graphs with maximum indegree and outdegree 2. Given a 3-regular directed graph, we find an edge $uv$ that must be in every Hamiltonian cycle (an outgoing edge from a vertex with indegree 2, or an incoming edge to a vertex with outdegree 2).  We split $uv$, introducing two degree-1 vertices but otherwise leaving the graph 3-regular.

\begin{figure}
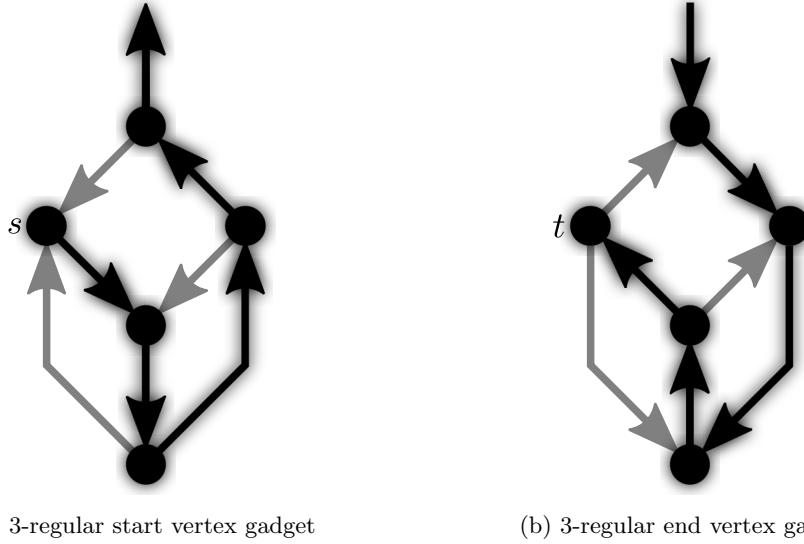

  \centering
  \begin{subfigure}{.3\linewidth}
    \centering
    \includegraphics[scale=2.0]{hamiltonicity-3regular-start-stroketopath}
    \caption{3-regular start vertex gadget}
    \label{fig:hamiltonicity-3regular-terminators-start}
  \end{subfigure}
  \hfil
  \begin{subfigure}{.3\linewidth}
    \centering
    \includegraphics[scale=2.0]{hamiltonicity-3regular-end-stroketopath}
    \caption{3-regular end vertex gadget}
    \label{fig:hamiltonicity-3regular-terminators-end}
  \end{subfigure}
  \caption{Gadgets that replace degree-1 start or end vertices to restore 3-regularity to the overall graph while maintaining a unique Hamiltonian path.  Vertices $s$ and $t$ are the new start and end vertices.}
  \label{fig:hamiltonicity-3regular-terminators}
\end{figure}

To restore 3-regularity we replace the degree-1 vertices with the graphs shown in Figure~\ref{fig:hamiltonicity-3regular-terminators}.  The unique longest (simple) path entering the graph in Figure~\ref{fig:hamiltonicity-3regular-terminators-end} ends at the vertex labeled $t$, because the first three vertices have outdegree 1 and the other successor of the fourth vertex is already in the path.  By a similar argument working backwards from the outgoing edge of the graph in Figure~\ref{fig:hamiltonicity-3regular-terminators-start}, the unique longest path leaving the graph starts at the vertex labeled $s$.  Thus, whether or not $s$ and $t$ are specified as the start and end vertices in the Hamiltonian path instance, all Hamiltonian paths in the transformed graph start at $s$ and end at $t$.  Vertex $s$ has outdegree 1 and $t$ has indegree 1, as claimed in the theorem statement.  Because $uv$ occurs in every Hamiltonian cycle of the input graph, there is a bijection between Hamiltonian cycles in the input instance and Hamiltonian paths in the output instance, and this bijection can be computed in polynomial time by replacing $uv$ with the unique paths in the start/end gadgets or vice versa.
\end{proof}

\subsection{Reduction from Hamiltonicity to $1 \times n$ Edge Matching}

The symmetry of $1\times n$ edge-matching puzzles is problematic for ASP-hardness. Because rotating any solution by $180^\circ$ will give another solution, the answer to the ASP problem is always `yes'. To avoid this trivial additional solution, we consider the version of $1\times n$ edge-matching puzzles where the left boundary edge's color is specified. This breaks the rotational symmetry, and we will show that this problem is ASP-complete through a parsimonious reduction. Without this restriction, our reduction is 2-monious, so we show \#P-hardness even for $1\times n$ edge-matching puzzles without any such restriction.

The reduction in \cite{bosboom2017even} that establishes NP-hardness of $1\times n$ edge-matching puzzles is not parsimonious because of garbage collection: the tiles corresponding to edges which are not part of the Hamiltonian path are placed at the end of the row of tiles in an arbitrary order. Our reduction will instead place these unused tiles near the corresponding vertex tiles, so that there is only one tile sequence corresponding to each Hamiltonian path.

\begin{theorem}\label{thm:asp}
    $1\times n$ signed and unsigned edge-matching puzzles with the left boundary edge color specified are ASP-complete and \#P-complete.
\end{theorem}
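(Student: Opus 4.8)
The plan is to give a parsimonious reduction from Hamiltonian path in planar $3$-regular directed graphs with specified start and end vertices, which is ASP-complete and \#P-complete by Theorem~\ref{cycle-to-path}. Given such a graph $G$ with start $s$ and end $t$, I would build one \emph{vertex tile} $T_v$ for each vertex $v$ and one \emph{edge tile} $T_e$ for each directed edge $e$, using colors that are unique except where matches are intended. The right side of $T_v$ would carry a color $\mathrm{out}_v$ shared by the left sides of all edge tiles leaving $v$, and the right side of an edge tile $(v,w)$ would carry a color $\mathrm{in}_w$ matched only by the left side of $T_w$. Reading a solution left to right then forces the alternation $T_{v_1}, T_{(v_1,v_2)}, T_{v_2}, T_{(v_2,v_3)}, \dots$, i.e.\ a walk in $G$; since each vertex tile occurs once and the board is covered, such a walk visits every vertex exactly once and is therefore a Hamiltonian path. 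I would fix the leftmost boundary color to $\mathrm{in}_s$, anchoring the path to start at $s$ and, crucially, breaking the global $180^\circ$ rotational symmetry that otherwise doubles every solution.

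The novel ingredient, and the point of departure from \cite{bosboom2017even}, is the treatment of the edge tiles for edges \emph{not} used by the path: rather than collecting them in arbitrary order at the end of the row (which destroys parsimony), I would equip each edge tile with an alternate \emph{parked} orientation whose exposed colors dock it immediately adjacent to its owning endpoint's vertex tile. Concretely, the two sides of $T_e$ not used for threading would carry ``park'' colors tied to a single endpoint of $e$, while the remaining sides of every tile would carry distinct \emph{dead} colors that match nothing, so that in any orientation at most the intended pair of sides can participate in a match. Because $G$ has maximum in- and out-degree $2$, only a constant number of incident edges of each vertex are ever unused, so a fixed canonical parking layout beside each vertex tile suffices, and uniqueness of the park colors forces both \emph{which} tiles park there and the \emph{order} in which they do.

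With these gadgets in place, I would establish the bijection that yields parsimony: a Hamiltonian path of $G$ determines the order of the vertex tiles, which edge tiles thread the main chain, and---via the forced parking layout---the exact position and orientation of every remaining tile, so each path maps to exactly one solution; conversely, the rigidity argument shows every solution decodes to exactly one Hamiltonian path. The same construction works verbatim for both signed and unsigned compatibility by interpreting each ``color'' as a matched $\pm$ pair or as an identical label. This parsimonious reduction gives both ASP-completeness and \#P-completeness for the boundary-colored problem; moreover, dropping the boundary color reintroduces only the single global $180^\circ$ rotation of each solution, so the reduction becomes $2$-monious and still yields \#P-completeness without a specified boundary.

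I expect the main obstacle to be proving that the layout is genuinely \emph{rigid}, i.e.\ that no unintended solution exists. Two things must be ruled out: first, spurious rotations or reflections of individual tiles, which the dead colors are designed to forbid but which must be checked against every orientation allowed for the tiles; and second---the real difficulty---the possibility that a parked edge tile slips into the main threading chain, that a parked tile matches at the wrong vertex, or that the parked tiles at a vertex admit more than one ordering or fail to fit when the number of unused incident edges changes from one path to another. Making the park colors simultaneously forbid all of these while still accommodating every valid path is the delicate part of the construction, and leaning on the degree-$\le 2$ bound to keep the parking load constant per vertex is what I expect to make it tractable.
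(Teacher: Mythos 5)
Your high-level plan is the same as the paper's: reduce parsimoniously from Hamiltonian path in $3$-regular directed graphs with max in-/out-degree $2$ (Theorem~\ref{cycle-to-path}), use the specified left boundary color to anchor the path at $s$ and kill the $180^\circ$ symmetry, and restore parsimony by forcing the tiles of unused edges to sit in a canonical position next to their vertex rather than being garbage-collected at the end of the row. However, the construction as written has a genuine gap: the ``park colors'' that are supposed to make all of this rigid are never actually defined, and you yourself identify the resulting forcing argument as an unresolved obstacle rather than proving it. Two specific things are missing. First, because the board is a completely filled $1\times n$ strip, a parked edge tile must match on \emph{both} its left and right sides inside the row; with a generic choice like ``left and right both $\mathrm{out}_v$'' you must then exclude, e.g., both out-edge tiles of a degree-$(1,2)$ vertex parking simultaneously (breaking the chain), a parked tile being mistaken for a threading tile, or a threading tile parking --- none of which is ruled out. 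Second, for every Hamiltonian path to yield a tiling of the \emph{same} fixed length $n$, the number of tiles parked at each vertex must be identical across all Hamiltonian $s$--$t$ paths, not merely bounded by a constant; this is true if every edge parks at its tail (each internal $(1,2)$-vertex always has exactly one unused out-edge, each $(2,1)$-vertex none, with $s$ and $t$ handled separately), but you neither fix the convention nor prove the invariant.

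The paper's proof dissolves both difficulties by not having separate vertex and edge tiles at all. It creates one tile per \emph{half-edge}: for a vertex $v$ with in-edge $e_1$ and out-edges $e_2,e_3$ the tiles are \TILE{e_1}{v_X}{v_I}{v_X}, \TILE{e_2}{v_O}{v_O}{v_I}, \TILE{e_3}{v_O}{v_O}{v_I}, where $v_I,v_O,v_X$ are colors private to $v$. Since $v_X$ appears on only one tile and $v_I,v_O$ only on $v$'s tiles, the three tiles are forced to be consecutive with color sequence $e_1,v_I,v_O,e_3$ (or its reverse); the tile of the \emph{unused} out-edge is the one locked in the middle, with its edge color rotated to face the top or bottom of the strip, where it need not match anything. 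This removes the need for park colors, makes the parking load trivially one tile per vertex group, and yields the forcing argument you were missing. If you want to salvage your two-tile-type version, you would need to supply exactly this kind of uniqueness structure for the park colors and then prove the case analysis excluding the spurious placements you listed.
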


\begin{proof}
    Clearly this problem is in FNP and its counting problem is in \#P. To show hardness, we present a parsimonious reduction from Hamiltonian path in 3-regular directed graphs, adapted from the reduction in \cite{bosboom2017even}.
    
    Given a 3-regular directed graph $G$ with specified vertices $s$ and $t$, we construct a $1\times n$ signed edge-matching puzzle as follows.
    (For the unsigned case, we will simply remove all signs.)
    For each edge $e$ in $G$, we have a color $e$, and for each vertex $v$ we have three colors $v_I$, $v_O$, and $v_X$.
    For each vertex $v$, we build three tiles; refer to Figure~\ref{fig:asp}.
    In one case, $v$ has one edge $e_1$ coming in and two edges $e_2$ and $e_3$ going out. Then we construct the tiles
    $$\TILE{+e_1}{-v_X}{-v_I}{-v_X}, \quad  \TILE{+v_O}{-v_O}{-e_2}{+v_I},
    \quad \text{and} \quad \TILE{+v_O}{-v_O}{-e_3}{+v_I}.$$
    In the other case, $v$ has two edges $e_1$ and $e_2$ coming in and one edge $e_3$ going out. Then we construct the tiles
    $$\TILE{+e_1}{+v_I}{-v_I}{-v_O}, \quad \TILE{+e_2}{+v_I}{-v_I}{-v_O}
    \quad \text{and} \quad \TILE{+v_O}{-v_X}{-e_3}{-v_X}.$$
    Each of these tiles corresponds to one of the half-edges incident to~$v$.
    (Overall, each edge is represented by two half-edge tiles.)
    We use that $s$ has outdegree 1 and $t$ has indegree 1, as provided by Theorem~\ref{cycle-to-path}. We remove the tiles corresponding to the half-edges entering $s$ and the tiles corresponding to half-edges leaving $t$, so $s$ and $t$ each have only one corresponding tile. Finally, we specify that the left boundary edge has color~$-s_O$.
    
    \begin{figure}
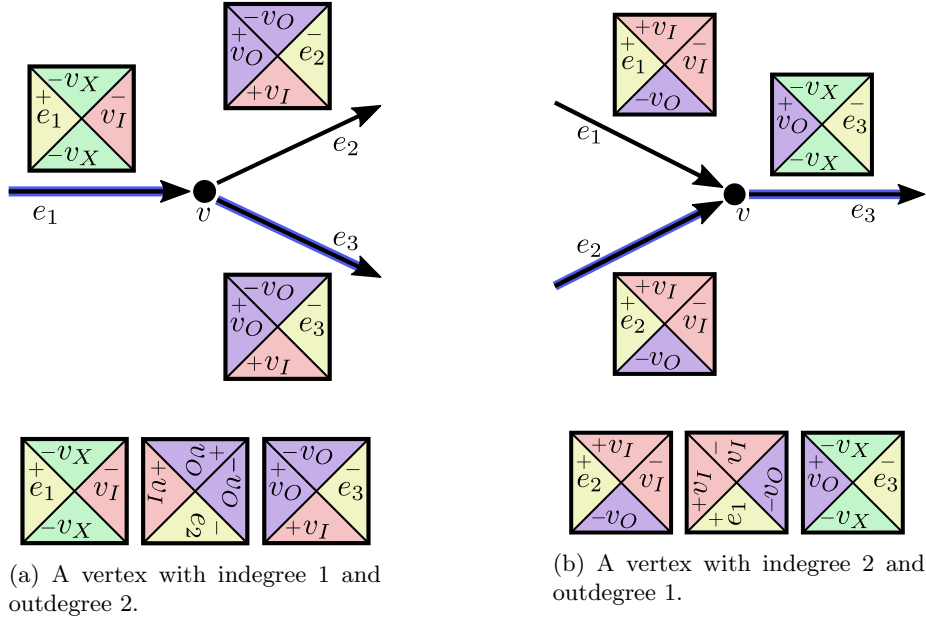

        \centering
        \begin{subfigure}{.3\linewidth}
            \centering
            \includegraphics[width=\linewidth]{in-1_out-2.pdf}
            \caption{A vertex with indegree 1 and outdegree 2.}
        \end{subfigure}
        \hfil
        \begin{subfigure}{.3\linewidth}
            \centering
            \includegraphics[width=\linewidth]{in-2_out-1.pdf}
            \caption{A vertex with indegree 2 and outdegree 1.}
        \end{subfigure}
        \caption{The tiles in the reduction showing ASP- and \#P-hardness of $1\times n$ edge-matching puzzles. At the bottom we show one possible edge-matching solution corresponding to one (blue) path through~$v$.}
        \label{fig:asp}
    \end{figure}
    
    We claim that the number of solutions to this edge-matching puzzle is the
    same as the number of Hamiltonian paths in $G$ from $s$ to $t$.

    First suppose that we have such a Hamiltonian path
    $s=v_1,v_2,\dots,v_{|V|}=t$.
    We can construct a solution to the edge-matching puzzle by placing the
    three tiles for each vertex $v_i$ consecutively,
    in the order $i = 1, 2, \dots, |V|$ the vertices appear in the path.
    As in the bottom of Figure~\ref{fig:asp},
    we place the three tiles for each vertex $v_i$ so that the tiles
    corresponding to the edges $e_i = (v_{i-1},v_i)$ and
    $e_{i+1} = (v_i,v_{i+1})$ that the path uses to enter and exit $v$
    are first and last, respectively, so the sequence of colors is
    $e_i,v_{i,I},v_{i,O},e_{i+1}$.
    The exposed colors are $+e_i$ on the left and $-e_{i+1}$ on the right,
    so the these placed triples of tiles match up at their ends
    (because the sequence of vertices is a path).
    There is only one tile for each of $s$ and $t$, which we place at the
    beginning and end.
    The left boundary color is then $+s_O$, as required,
    and the rightmost boundary color is $+t_I$.
    
    Next we show that every solution to the edge-matching puzzle has this form, and thus corresponds to a Hamiltonian path. Suppose we have a solution to the edge-matching puzzle. Because the left boundary color is $-s_O$, the tile corresponding to $s$ must be placed on the left oriented with $+s_O$ on the left and the outgoing edge color on the right.  The only tile corresponding to $t$ is $\TILE{+e}{-t_X}{-t_I}{-t_X}$, where $e$ is the incoming edge. Because colors $t_X$ and $t_I$ do not appear on any other tiles, this tile must be placed rightmost with color $+e$ on the left.
    
    Consider a vertex $v$ other than $s$ and $t$. None of the tiles corresponding to $v$ can be at either end of the solution, because those spaces are claimed by $s$ and $t$. Suppose $v$ has indegree 1 and outdegree 2; the other case is similar. Because $\TILE{+e_1}{-v_X}{-v_I}{-v_X}$ is the only tile with the color $v_X$, it must be adjacent to other tiles on the other two sides. The tile adjacent on the side with color $-v_I$ must be one of the two other tiles corresponding to~$v$. Whichever tile it is, its orientation is fixed by matching color $v_I$, so the opposite side must have color $-v_O$, so the following tile must be the third tile corresponding to $v$, with the color of another edge incident to $v$ on the side touching the next tile. In summary, the three tiles corresponding to $v$ must be consecutive, and the two colors they expose to other tiles are two edges incident to $v$ with different orientations relative to $v$, with the local configuration of the three tiles determined by those exposed colors.
    
    Suppose the sequence of tiles corresponding to vertex $u$ are adjacent to the sequence corresponding to vertex $v$. Then the side where these sequences touch must have color $e$, where $e$ is either $(u,v)$ outgoing from $u$ and incoming to $v$ or $(v,u)$ outgoing from $v$ and incoming to~$u$. The other left and right edges of these tiles must also have edge colors corresponding to edges incident to $u$ and $v$. By induction, if the solution has several consecutive sequences of tiles corresponding to vertices, the sequence of vertices must form a path in $G$ in either direction. The entire solution must therefore be a concatenation of sequences corresponding to vertices starting with $s$ and ending with $t$, such that adjacent vertices share an edge from left to right, and using each tile exactly once. Hence it must correspond to a Hamiltonian path.
    
    For each Hamiltonian path, there is exactly one corresponding solution to the edge-matching puzzle, because there is only one way to connect the tiles corresponding to a vertex for each pair of edges used at that vertex. So there are the same number of Hamiltonian paths in $G$ from $s$ to $t$ and solutions to the edge-matching puzzle. Because this reduction is parsimonious, it shows that $1\times n$ signed edge-matching puzzles with the color of the left boundary edge specified is ASP- and \#P-complete.  The same reduction with all the signs removed proves the same result for unsigned edge-matching puzzles.
\end{proof}

\begin{corollary}\label{cor:asp}
    $1\times n$ signed and unsigned edge-matching puzzles are \#P-complete and their
    2-ASP problem is NP-hard.
\end{corollary}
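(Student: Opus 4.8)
The plan is to reuse the reduction of Theorem~\ref{thm:asp} verbatim, except that we omit the specification of the leftmost boundary color, and then argue that dropping this single constraint turns the parsimonious reduction into a \emph{$2$-monious} one, from which both claims follow. First I would establish the structure of solutions to the boundary-free instance. The argument in the proof of Theorem~\ref{thm:asp} that the three tiles of each internal vertex must be placed consecutively, that the resulting vertex-groups chain together along shared edge colors into a path, and that only the (tile-reduced) groups for $s$ and $t$ can occupy the two ends of the strip, nowhere uses the boundary color; the only freedom gained by removing it is that either $s$ or $t$ may sit at the left end. Hence every solution is either a ``forward'' solution exactly as produced in Theorem~\ref{thm:asp} (left boundary $s_I$, right boundary $t_I$) or its global $180^\circ$ rotation. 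In unsigned matching a $180^\circ$ rotation of any valid $1\times n$ placement is again valid, since reversing the tile order while rotating each tile preserves every adjacency; and because $s_I \neq t_I$, the rotation $\bar P$ of a forward solution $P$ has left boundary $t_I \neq s_I$ and is thus a distinct placement (in particular, no solution is its own rotation). So each Hamiltonian $s$--$t$ path of $G$ yields exactly the two solutions $P$ and $\bar P$, and these exhaust the solution set, making the reduction multiply the solution count by exactly $c=2$. Invoking Theorem~\ref{cycle-to-path}, which makes counting Hamiltonian paths \#P-complete, and noting that a $c$-monious reduction with computable constant $c=2$ preserves \#P-hardness while membership in \#P is immediate, this gives \#P-completeness of boundary-free $1\times n$ edge matching.

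For NP-hardness of the $2$-ASP problem I would reduce from the ``another-solution'' ($1$-ASP) problem of the boundary-\emph{specified} version, which is NP-hard by the ASP-completeness half of Theorem~\ref{thm:asp}. Given a boundary-specified instance $I$ (left boundary $s_I$) together with one solution $P$, output the boundary-free instance $I'$ on the same tiles, together with the two solutions $P$ and $\bar P$ (the latter computed by reversing and rotating $P$). By the structure above, the solutions of $I'$ partition according to their left-boundary color into the $s_I$-solutions, which are exactly the solutions of $I$, and the $t_I$-solutions, which are exactly their rotations, with rotation furnishing a bijection between the two classes. Consequently $I'$ has a third solution beyond $P$ and $\bar P$ if and only if $I$ has a second solution beyond $P$, so the produced $2$-ASP instance answers the $1$-ASP question for $I$, establishing NP-hardness.

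The main obstacle I anticipate is the first step: pinning down that no \emph{spurious} solutions appear once the boundary constraint is dropped, i.e.\ verifying carefully that every valid placement really is a forward solution or a rotation of one and that the extra orientational freedom creates no additional placements, together with the clean $s_I \neq t_I$ argument that rules out self-rotations and keeps $P$ and $\bar P$ distinct, so that the solution count is exactly doubled.
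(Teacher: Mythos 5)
Your proposal is correct and takes essentially the same route as the paper: drop the boundary-color constraint, observe that the solutions are exactly the forward solutions of Theorem~\ref{thm:asp} and their $180^\circ$ rotations, conclude the reduction is 2-monious, and derive both \#P-completeness and NP-hardness of the 2-ASP problem. You actually supply more detail than the paper's one-sentence argument (the $s_I \neq t_I$ check ruling out self-rotations and the explicit reduction from the 1-ASP of the boundary-specified version); the only omission is that you argue the rotation-validity only for unsigned matching, whereas the paper also notes the identical argument covers the signed case.
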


\begin{proof}
Without a specified left boundary color, we cannot guarantee that the tile corresponding to the start vertex $s$ is on the left and the tile corresponding to the end vertex $t$ is on the right; instead we only have that they are at the ends.  Thus each solution to the edge-matching puzzle can be rotated $180^\circ$ to form another solution, so the reduction is 2-monious.
\end{proof}

\section{Triangular Edge Matching}
\label{sec:triangle}

In this section, we study $1 \times n$ edge-matching puzzles with triangular tiles, specifically, equilateral and right isosceles triangles.
There is one natural interpretation of ``$1 \times n$'' for equilateral triangles, as shown in Figure~\ref{fig:equilateral}.
However, for right isosceles triangles, there are two natural interpretations.
If the triangles have legs of length 1, then to pack a $1 \times n$ box they must have alternating hypotenuse/leg contact, which we will simply refer to as hypotenuse contact, as shown in Figure~\ref{fig:hypotenuse}. On the other hand, if the triangles have a height of 1, then they must be packed using only leg-to-leg contacts, as shown in Figure~\ref{fig:leg}.

\begin{figure}[h]
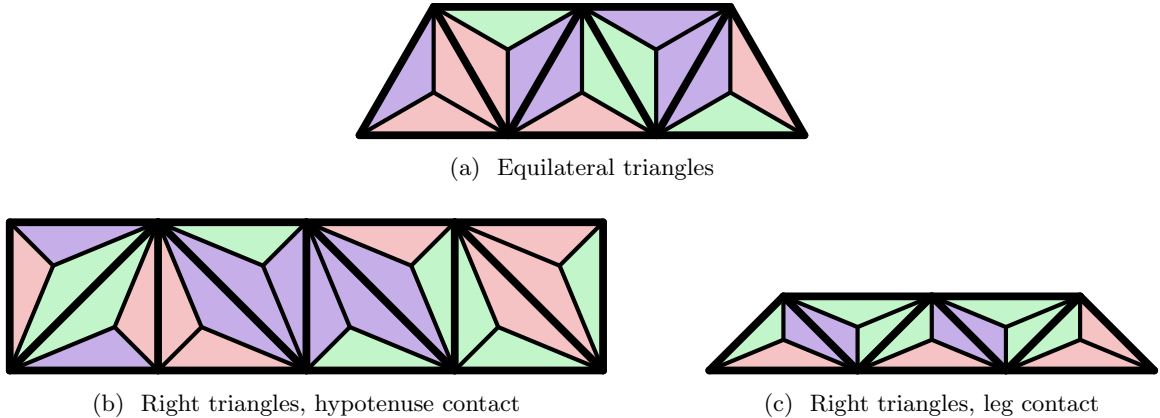

  \centering
  \subcaptionbox
    {\label{fig:equilateral} Equilateral triangles}
    {\includegraphics[scale=0.6]{triangle_equilateral}}

  \bigskip

  \subcaptionbox
    {\label{fig:hypotenuse} Right triangles, hypotenuse contact}
    {\includegraphics[scale=0.6]{triangle_hypo}}
  \hfil\hfil
  \subcaptionbox
    {\label{fig:leg} Right triangles, leg contact}
    {\includegraphics[scale=0.6]{triangle_leg}}

  \caption{Three types of triangular tiles.}
  \label{fig:right triangle placements}
\end{figure}

Hypotenuse-contact right triangles can directly and parsimoniously simulate square tiles: for each square, create two triangles whose hypotenuses have a matching, unique color.
(This idea is mentioned in another context in the conclusion of \cite{bosboom2017even}.)
Thus NP-completeness, ASP-completeness (with left boundary specified), and \#P-completeness of these puzzles follows directly from results on square tiles.
We devote the rest of this section to
equilateral triangles (Section~\ref{equilateral-triangle-hardness})
and right triangles with leg contact (Section~\ref{right triangles}).

\subsection{Equilateral-Triangle Edge Matching} \label{equilateral-triangle-hardness}

In this section, we prove NP/\#P/ASP-completeness of $1 \times n$ equilateral
triangular edge-matching puzzles. We start with an NP-completeness proof, then
augment it and analyze it further to prove \#P/ASP-completeness.

\begin{theorem} \label{thm:equilateral NP}
  $1\times n$ signed and unsigned equilateral-triangle edge-matching puzzles
  are NP-complete.  The same results hold if we allow tile reflection.
\end{theorem}

\begin{proof}
Clearly these problems are in NP.
To show NP-hardness, we reduce from Hamiltonian path in 3-regular
\emph{undirected} graphs \cite{Garey-Johnson-1979}
(in contrast to Section~\ref{sec:ASP} which considered directed graphs).
We describe signed tiles resulting from our reduction to signed edge matching;
for the unsigned puzzle, we will just drop the signs.
Similar to the proof of Theorem~\ref{thm:asp}, we will create exactly two tiles
per edge; refer to Figure~\ref{signed triangle hardness}.
To assign complementary signs to the edge colors,
arbitrarily orient each edge $e$ (but paths need not follow this orientation).
For every vertex $v$ with incident edges $e_1, e_2, e_3$,
construct the triangular tiles
$$\TRITILEEQ{+v}{-v}{\pm e_1}, \quad \TRITILEEQ{+v}{-v}{\pm e_2},
\quad \text{and} \quad \TRITILEEQ{+v}{-v}{\pm e_3},$$
where the sign of each $e_i$ color is positive if $e_i$ was arbitrarily
oriented to be incoming to $v$ and negative otherwise.
We claim that these tiles have a signed or unsigned edge-matching solution
if and only if the graph has a Hamiltonian path.

\begin{figure}
\centering
\includegraphics[scale=2]{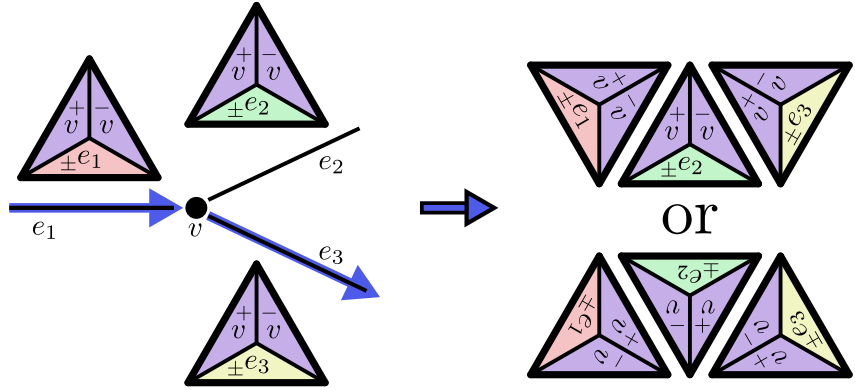}
\caption{NP-hardness of $1 \times n$ equilateral-triangle edge matching, showing one possible (blue) path through $v$ and the corresponding edge-matching solutions (depending on parity up to this point).}
\label{signed triangle hardness}
\end{figure}

First suppose that there is a Hamiltonian path $v_1, v_2, \ldots, v_n$.
We can construct an edge-matching solution by, for each vertex $v_i$,
arranging the three corresponding tiles so that $\{v_{i-1},v_i\}$ is on the
left boundary edge and $\{v_i,v_{i+1}\}$ is on the right boundary edge,
as in Figure~\ref{signed triangle hardness} (right, top or bottom according to
parity of $i$ as required by the tiling).
The figure illustrates that the vertex colors match with opposite signs,
and that tiles do not need to be reflected.
By the arbitrary orientation of the edges, every edge color will match with
its negated color.

Now suppose that there is an edge-matching solution, even without the color
signs.  Without the color signs, the tiles are reflectionally symmetric,
so the following argument works also when we allow tile reflection.
Each vertex color $v$ appears in exactly three tiles, so the three vertex tiles
can match only with each other, or some of them can appear as the extreme left
or extreme right tile.  If any of the three tiles for $v$ are extreme, then
none of the tiles can be placed in the middle of an edge-matching solution
(lacking the three tiles required to form an $180^\circ$ angle), so in this
case, all three tiles for $v$ appear at the left and right extremes of the
solution, effectively ``wrapping around'' the $1 \times n$ board.
For every other vertex, the three corresponding tiles must appear together.
Listing all of the vertices in the order in which their color appears in the
solution yields a Hamiltonian path of the original graph.
(If one vertex's tiles wrap around, then this process yields a Hamiltonian
cycle, which is stronger.)
\end{proof}

The proof above suggests an alternate approach to proving Theorem~\ref{thm:asp}
about squares: unify the $v_I, v_O, v_x$ colors into a single color,
and reduce from \emph{undirected} Hamiltonian path.  However, for unsigned
colors, this change would make the reduction not parsimonious, because it
enables the middle tile to rotate by $180^\circ$ in the two arrangements on the
bottom of Figure~\ref{fig:asp}.
But equilateral triangles lack this ambiguity, and we are able to obtain
parsimony by a more careful handling of the start and end.

First we need a slightly different form of undirected Hamiltonicity:

\begin{lemma} \label{lem:undirected Ham}
  Finding Hamiltonian paths, with or without specified start vertex $s$ and/or
  end vertex $t$, in maximum-degree-3 planar undirected graphs is ASP-complete,
  and counting Hamiltonian paths in those graphs is \#P-complete.
  The same result holds when the given vertices $s,t$ have degree~1.
\end{lemma}

\begin{proof}
We present a parsimonious reduction from Hamiltonian cycle in maximum-degree-3
planar undirected graphs (the same graphs) having at least one vertex of
degree $2$, proved ASP-complete by Seta \cite{Seta02thecomplexities}.
Our reduction is similar to the first step in the proof of
Theorem~\ref{cycle-to-path}.

Let $G$ be a maximum-degree-3 undirected graph with a degree-$2$ vertex~$v$.
Let $\{u,v\}$ be one of $v$'s incident edges, which must be in every
Hamiltonian cycle.
Construct $G'$ by adding two new vertices $s$ and~$t$,
and replacing the edge $\{u,v\}$ with edges $\{s,u\}$ and $\{t,v\}$.
Because $s$ and $t$ have degree~$1$, they are in every Hamiltonian path of~$G'$.
Because edge $\{u,v\}$ is contained in every Hamiltonian cycle in $G'$,
there is a direct bijection between Hamiltonian cycles in $G$ and
Hamiltonian ($s$-$t$) paths in~$G'$.
\end{proof}

\begin{theorem}
  $1\times n$ signed and unsigned equilateral-triangle edge-matching puzzles
  with the left boundary edge color specified are ASP-complete and \#P-complete.
\end{theorem}

\begin{proof}
Clearly this problem is in FNP and its counting problem is in \#P.
To show hardness, we present a parsimonious reduction from
Hamiltonian $s$-$t$ paths in maximum-degree-3 undirected graphs
where $s$ and $t$ have degree~$1$, from Lemma~\ref{lem:undirected Ham}.
Our reduction is a modification of the NP-hardness reduction in
Theorem~\ref{thm:equilateral NP} that differs only for the new case of
vertices with degree $< 3$.
For each degree-$2$ vertex $v$, we attach a half-edge $\{v\}$
(with no other endpoint), and then apply the degree-$3$ construction from
Figure~\ref{signed triangle hardness}.
We can assume that the only degree-$1$ vertices are $s$ and $t$, because any
other degree-$1$ vertices could not possibly be reached by an $s$-$t$ path
(and thus in this case we could parsimoniously reduce by constructing any
unsolvable edge-matching instance).
For the degree-$1$ vertices $s$ and $t$, we create corresponding tiles
$$ \TRITILER{s}{\pm e_1}{U} \quad\text{and}\quad \TRITILER{\pm e_2}{U}{U}, $$
where $e_1$ and $e_2$ represent the unique edges incident to $s$ and $t$
respectively, with signs chosen for these edge colors based on our arbitrary
orientation of the original graph, in the same fashion as for all other tiles.
(As before, for the unsigned problem, we just drop the signs.)
Each occurrence of $U$ represents a unique color not occurring
in any other tile.
Finally, we specify the left boundary color to be~$s$,
which is another unique color.

Because the tile corresponding to vertex $s$ is the only one with color $s$,
it must be placed as the leftmost tile.
Because the tile corresponding to $t$ has two sides with unique colors,
it must be placed as the rightmost tile.
As argued in Theorem~\ref{thm:equilateral NP}, every triplet of tiles
corresponding to a degree-3 (or degree-2) vertex must occur consecutively,
because the $s$ and $t$ tiles prevent ``wrapping around''.
Therefore every edge-matching solution induces an ordering of the
vertex tile triplets between the leftmost $s$ tile and the rightmost $t$ tile.
To guarantee a bijection between edge-matching solutions and
Hamiltonian $s$-$t$ paths solutions, it only remains to show that,
given an ordering of the tile triplets, there is a unique arrangement
of the three tiles within each triplet.

Suppose the tile triplet for vertex $v$ occurs between the triplets for
vertices $u$ and $w$.  The only edge colors that $v$'s triplet have in common
with $u$'s and $w$'s triplets are the colors representing edges $\{u,v\}$ and
$\{v,w\}$ in the original graph, so the two tiles in $v$'s triplet containing
the $\{u,v\}$ and $\{v,w\}$ colors must be on the left and right respectively,
with those edges exposed.
The remaining tile in the triplet has no choice but to be oriented between
them with its $v$-colored edges facing the two other tiles in the triplet,
and its third edge facing the $1 \times n$ boundary.
Thus the arrangement of tiles within each triplet is uniquely defined by the
ordering of tile triplets along the box,
completing the proof that our reduction is parsimonious.
\end{proof}

\begin{corollary}\label{cor:equilateral asp}
    $1\times n$ signed and unsigned equilateral-triangle edge-matching puzzles
    are \#P-complete and their 2-ASP problem is NP-hard.
\end{corollary}

\begin{proof}
    As in Corollary~\ref{cor:asp}.
\end{proof}

\subsection{Leg-Contact Right-Isosceles-Triangle Edge Matching}
\label{right triangles}
In this section, we show that edge matching with right isosceles triangles
that tile a $1 \times n$ box by leg contact (as in Figure~\ref{fig:leg})
is closely related to finding an Eulerian path in a graph,
or more precisely, two variants called antidirected and forbidden-transition
Eulerian paths analyzed in Sections~\ref{sec:alternating euler}
and~\ref{sec:forbidden euler} respectively.
We use this connection to show that these puzzles can be solved in
polynomial time (Section~\ref{right triangle P}), and then to show that
counting solutions to these puzzles is \#P-complete
(Section~\ref{right triangle count}).

\subsubsection{Antidirected Eulerian Path Characterization}
\label{sec:alternating euler}

Consider a directed graph $G$.
Recall that a \defn{(directed) Eulerian path} is a directed path in $G$
(respecting the edge directions in~$G$)
that visits every edge in $G$ exactly once.  
It is well-known that a connected graph has such a path if and only if
it has zero or two vertices of odd degree
\cite[Corollary~4.1]{Bondy-Murty-1976},
and in this case, the path can be constructed in linear time
\cite{Fleischner-1991}.

Here we analyze the variant where the edge directions of $G$ must alternate.
Precisely, an \defn{antidirected path}
\cite{Grunbaum-1971,Berman-1978,antistrong} is a
sequence of edges where every pair of consecutive edges share an endpoint
(an undirected path) and furthermore those edges 
either both point toward or both point away from that shared endpoint.
In other words, an antidirected path alternates between following an
edge of $G$ in the ``forwards'' direction and following an edge of $G$
in the ``backwards'' direction, with an arbitrary starting parity.
An \defn{antidirected Eulerian path}
\cite{Berman-1978,Fleischner-1990,Zitnik-1996}
of $G$ is an antidirected path of $G$
that visits every edge (either forwards or backwards) exactly once.
Examples of past results on this topic include that
a directed graph without degree-2 vertices has an odd number of Eulerian paths
if and only if it is 4-regular and has an antidirected Eulerian path
\cite{Berman-1978},
while not every connected 4-regular undirected graph with an odd cycle has an
orientation admitting an antidirected Eulerian path \cite{Zitnik-1996}.

In the \defn{[antidirected] Eulerian path problem}, we are given a
directed graph $G$, and want to know whether $G$ has an [antidirected]
Eulerian path, and if so, to find one.
We relate these two problems:

\begin{theorem} \label{thm:alternating euler}
  The antidirected Eulerian path problem can be reduced
  in linear time to the Eulerian path problem.
\end{theorem}

\begin{proof}
Let $G$ be a directed graph input for the antidirected Eulerian path problem.
Construct an undirected bipartite graph $G'$
(called the ``split'' of $G$ by West \cite[Definition 1.4.20]{intro-to-graphs})
as follows; refer to Figure~\ref{fig:alternating euler}.
For each vertex $v \in G$, construct two vertices $v^+$ and $v^-$ in $G'$.
For every directed edge $e = (u, v) \in G$,
add the undirected edge $e' = \{u^+, v^-\}$ to $G'$.
Because every edge in $G'$ connects a plus vertex to a minus vertex,
$G'$ is bipartite.

\begin{figure}
\centering
\includegraphics[scale=0.75]{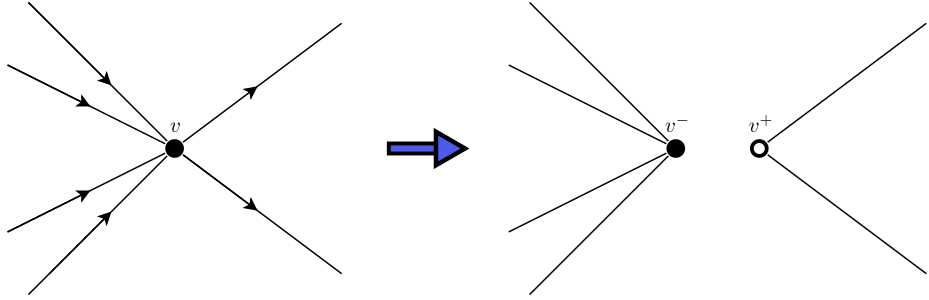}
\caption{Reduction from antidirected Eulerian path to Eulerian path.}
\label{fig:alternating euler}
\end{figure}

We claim that paths in $G'$ correspond to antidirected paths in~$G$.
For any path
$p' = (v_1^\pm, v_2^\mp, \allowbreak v_3^\pm, v_4^\mp, \allowbreak \dots)$
in $G'$ (where signs alternate by bipartiteness),
consider mapping each edge of the form $\{v_i^+, v_{i+1}^-\}$ in $p'$
to the corresponding edge $(v_i, v_{i+1})$ of~$G$, and
mapping each edge of the form $\{v_i^-, v_{i+1}^+\}$ in $p'$
to the (backwards traversal of) the corresponding edge $(v_{i+1},v_i)$ of $G$.
Then we obtain an antidirected path in~$G$.
Because the mapping between edges of $G$ and $G'$ is a bijection,
so is this transformation.
By the same bijectivity, if $p'$ is Eulerian, then so is~$p$.
Therefore Eulerian paths in $G'$ correspond to
antidirected Eulerian paths in~$G$.
\end{proof}

A similar result was obtained independently in
\cite{alternating-euler-independent}.

For our application to edge matching, we will need to solve a slightly
stronger version of the problem:

\begin{corollary} \label{cor:alternating euler}
  The antidirected Eulerian path problem can be solved in linear time.
  The same result holds if the path is further restricted to start and/or
  end with specified direction (forwards or backwards).
\end{corollary}

\begin{proof}
  The first sentence follows from the reduction of
  Theorem~\ref{thm:alternating euler} combined with linear-time algorithms
  for finding Eulerian paths \cite{Fleischner-1991}.

  Now suppose we are given the starting and ending directions
  $s, t \in \{\text{forwards}, \text{backwards}\}$ for an
  antidirected Eulerian path.
  Applying the previous algorithm, we can detect whether $G$ has any
  antidirected Eulerian path, i.e., whether $G'$ from the proof of
  Theorem~\ref{thm:alternating euler} has any Eulerian path.
  If the answer is ``no'', then we are done.
  Otherwise, by the characterization of Eulerian paths
  \cite[Corollary~4.1]{Bondy-Murty-1976},
  either (1)~every vertex of $G'$ has even degree,
  or (2)~exactly two vertices of $G'$ have odd degree.

  In the first case, every Eulerian path $p'$ of $G'$ is also a cycle, so when
  we translate to an antidirected Eulerian path/cycle $p$ of~$G$,
  the starting orientation is the same as the ending orientation
  if and only if $G$ has an odd number $e$ of edges.
  Thus we can answer the restricted antidirected Eulerian path problem
  by checking whether $(s = t) \leftrightarrow (e \text{ odd})$.
  If $s=t$ and $e$ is odd, then we find an antidirected Eulerian cycle and
  choose the starting parity for a path to match $s=t$.
  If $s \neq t$ and $e$ is even, then we find any antidirected Eulerian cycle
  and any starting point, and reverse the path if $s$ and $t$ mismatch.
  Otherwise, no satisfying antidirected Eulerian path exists.

  In the second case, every Eulerian path $p'$ of $G'$ has its endpoints at the
  two odd-degree vertices $o_1,o_2$ of~$G'$, so every antidirected Eulerian
  path $p$ in $G$ has its extreme edge orientations determined by whether
  $o_1$ and $o_2$ are plus or minus vertices (and which is chosen to be the
  start versus end of the path).
  If $o_1$ and $o_2$ are both plus vertices, then $s=\text{forwards}$ and
  $t=\text{backwards}$ is the only possibility.
  If $o_1$ and $o_2$ are both minus vertices, then $s=\text{backwards}$ and
  $t=\text{forwards}$ is the only possibility.
  If $o_1$ and $o_2$ are plus and minus vertices,
  then $s = t$ is the only constraint:
  if $s = t = \text{forwards}$, then we start at the plus vertex; and
  if $s = t = \text{backwards}$, then we start at the minus vertex.
  Otherwise, no satisfying antidirected Eulerian path exists.

  Therefore we can solve the restricted form of the
  antidirected Eulerian path problem.
\end{proof}

\subsubsection{Forbidden-Transition Eulerian Path Characterization}
\label{sec:forbidden euler}

In the \defn{forbidden-transition Eulerian path problem} \cite{Kotzig1968},
we are given an undirected graph $G = (V,E)$ and, for every vertex $v \in V$,
a partition of the edges $E_v$ incident to $v$ into groups
$P_{v,1}, P_{v,2}, \dots, P_{v,k_v}$.
The goal is to find an Eulerian path $v_0, v_1, \dots, v_{|E|}$ of $G$
such that, for every vertex visit $v_i$ where $0 < i < |E|$,
the incident edges $(v_{i-1},v_i)$ and $(v_i,v_{i+1})$ belong to
different groups among $P_{v_i,1}, P_{v_i,2}, \dots, P_{v_i,k_{v_i}}$.
In other words, we forbid the subpath $(v_{i-1},v_i,v_{i+1})$
when $(v_{i-1},v_i)$ and $(v_i,v_{i+1})$ belong to a common group $P_{v_i,j}$.%
\footnote{It is tempting to think that antidirected Eulerian path in a directed
  graph is a special case of forbidden-transition Eulerian path in an
  undirected graph, using two groups at each vertex to represent the outgoing
  vs.\ incoming edges.  However, the antidirected constraint \emph{requires
  repeating} the incoming/outgoing nature at each vertex, while the
  forbidden-transition constraint \emph{prevents repeating} the
  incoming/outgoing nature at each vertex.}
In a \defn{forbidden-transition Eulerian cycle}, we similarly restrict
the subpath $(v_{|E|-1}, v_{|E|} = v_0, v_1)$.

Kotzig \cite{Kotzig1968} showed (in a slightly more general scenario)
that the natural necessary conditions for this problem are in fact sufficient.
We repeat Kotzig's mathematical argument here in order to verify that it
also yields an efficient algorithm.

\begin{theorem}[\cite{Kotzig1968}] \label{thm:forbidden Euler}
  An undirected graph $G$ and partition system $P$ has a forbidden-transition
  Eulerian path if and only if $G$ has an Eulerian path and
  every group $P_{v,i}$ has $|P_{v,i}| \leq \lceil \degree(v)/2 \rceil$.
  If furthermore $G$ has an Eulerian cycle, then $(G,P)$ has a
  forbidden-transition Eulerian cycle.
  When such a path/cycle exists, it can be found in linear time.
\end{theorem}

\begin{proof}
  By the characterization of Eulerian paths
  \cite[Corollary~4.1]{Bondy-Murty-1976},
  $G$ must have exactly zero or two vertices of odd degree.
  We can reduce to the case of zero odd-degree vertices as follows.
  If $G$ has two odd-degree vertices, then add an edge between them,
  which increases their degrees to even but does not
  change $\lceil \degree(v_i)/2 \rceil$.
  Now apply the zero-odd-degree-vertices case of the present theorem
  (proved below) to obtain an Eulerian cycle with the desired property.
  Removing the added edge results in an Eulerian path with the desired property.
  Therefore we can assume every vertex has even degree,
  so we can ignore the ceilings.

  Next we prove that the conditions are necessary.
  Clearly $G$ having an Eulerian path is necessary for it to have a
  forbidden-transition Eulerian path.
  If any $|P_{v,i}| > \degree(v)/2$, then we claim that $(G,P)$ cannot have a
  forbidden-transition Eulerian path.  Any Eulerian path in $G$ is a cycle, and
  thus its traversal order pairs up the edges $E_v$ incident to $v$
  into $\degree(v)/2$ pairs.  By the Pigeonhole Principle,
  some pair has both its edges in $P_{v,i}$, which is a forbidden transition.

  Now suppose $G$ has an Eulerian path and every group $P_{v,i}$ satisfies
  $|P_{v,i}| \leq \degree(v)/2$.
  For each vertex $v$, order its incident edges
  $E_v = \{e_1, e_2, \dots, e_{\degree(v)}\}$
  so that all edges from group $P_{v,i}$ appear consecutively in the ordering,
  for all $1 \leq i \leq k_v$.
  Now pair each edge $e_j$ with $e_{j + \degree(v)/2}$,
  for $1 \leq j \leq \degree(v)/2$.
  Because each $|P_{v,i}| \leq \degree(v)/2$,
  this pairing has no forbidden pairs.
  The perfect pairing at each vertex partitions the graph's edges into
  edge-disjoint cycles.

  To merge these cycles into one Eulerian cycle,
  take any two cycles $C,C'$ that share a vertex~$v$
  (which exist because $G$ has an Eulerian path so its edges are connected).
  Suppose one cycle pairs edges $(e_1,e_2)$ at~$v$,
  while the other cycle pairs edges $(e'_1,e'_2)$ at~$v$.
  Suppose $e_1,e_2,e'_1,e'_2$ are in groups $i_1,i_2,i'_1,i'_2$.
  If we change the local pairing to $(e_1,e'_2)$ and $(e'_1,e_2)$,
  then we merge the cycles, and avoid forbidden pairs provided $i_1 \neq i'_2$
  and $i'_1 \neq i_2$.
  If we change the local pairing to $(e_1,e'_1)$ and $(e'_2,e_2)$
  (and reverse one of the cycles), then we again merge the cycles, this time
  avoiding forbidden pairs provided $i_1 \neq i'_1$ and $i'_2 \neq i_2$.
  Because the cycles have no forbidden pairs,
  $i_1 \neq i_2$ and $i'_1 \neq i'_2$.
  Thus we can have at most two equalities among the four possible comparisons
  between $\{i_1,i_2\}$ and $\{i'_1,i'_2\}$.
  Therefore one of the two merging strategies works.

  We can implement this algorithm in linear time by constructing the pairing
  locally as linked pointers, representing each cycle as a doubly linked list
  on its edges, where each edge stores its two neighboring edges in the cycle
  in no particular order.
  Number the cycles $1, 2, \dots, k$, and iterate over the cycles
  to mark each vertex with each of the cycles it belongs to,
  along with one edge pairing from that cycle.
  Label cycle $1$ as ``merged'' and the rest as ``unmerged''.
  Perform a depth-first search in $G$ from any vertex that is in cycle~$1$.
  At each vertex $v$ visited, iterate through the cycles that $v$ belongs to
  (via $v$'s marks); if any cycle $i$ has not yet been merged, then merge it
  into cycle $1$ by adjusting $O(1)$ pointers among $v$'s marked edge pairings
  for cycles $1$ and~$i$, labeling cycle $i$ as ``merged''.
  By induction, every vertex visited by the depth-first search will have
  already been merged into cycle~$1$.
  The running time beyond the linear cost of depth-first search
  is proportional to the number of marks, which 
  (by the Handshaking Lemma) is twice the number of edges.
  This algorithm is essentially the efficient implementation of
  Hierholzer's Algorithm for Eulerian tours from \cite{Fleischner-1991}.
\end{proof}

Next we combine this result with the results of
Section~\ref{sec:alternating euler} about antidirected Eulerian paths.
For a \emph{directed} graph $G$ and a partition system $P$,
define a \defn{forbidden-transition antidirected Eulerian path} in $(G,P)$
to be an antidirected Eulerian path $e_1, e_2, \dots, e_{|E|}$ of $G$ such that
no two edges $e_i$ and $e_{i+1}$ belong to a common group $P_{v,j}$
where $v$ is the shared vertex of $e_i$ and $e_{i+1}$.

\begin{corollary} \label{cor:forbidden alternating euler}
  The forbidden-transition antidirected Eulerian path problem
  can be solved in linear time.
  The same result holds if the path is further restricted to start and/or
  end with specified direction (forwards or backwards).
\end{corollary}

\begin{proof}
  Apply the reduction of Theorem~\ref{thm:alternating euler}
  to obtain an undirected graph $G'$ with the property that Eulerian paths
  in $G'$ correspond to antidirected Eulerian paths in~$G$.
  For each vertex $v^\pm$ of $G'$ and each $1 \leq i \leq k_v$,
  define $P'_{v^\pm,i}$ to be the set of edges of $G'$ incident to $v^\pm$
  that correspond to edges of $G$ in $P_{v,i}$.
  Then apply Theorem~\ref{thm:forbidden Euler} to decide whether $(G',P')$
  has a forbidden-transition Eulerian path, which is equivalent to
  whether $(G,P)$ has a forbidden-transition antidirected Eulerian path.
  To handle the start/end direction constraints, we can apply the same
  post-analysis as in Corollary~\ref{cor:alternating euler}.
\end{proof}

\subsubsection{Linear-Time Algorithm for Leg-Contact Right-Isosceles-Triangle Edge Matching}
\label{right triangle P}

Now we use the algorithms we have built for antidirected and
forbidden-transition Eulerian paths to solve leg-contact
right-isosceles-triangle edge matching.  The unsigned case reduces to
antidirected Eulerian paths, while the signed case reduces to
forbidden-transition antidirected Eulerian paths.

\begin{theorem}\label{thm:right triangle P}
  $1\times n$ signed and unsigned leg-contact right-isosceles-triangle
  edge-matching puzzles can be solved in linear time.
\end{theorem}

\begin{proof}
First note that tile hypotenuses can never touch in a $1 \times n$ box
by leg contact, so we can ignore those edges' colors completely.
We treat the signed and unsigned cases separately:

\medskip

\textbf{Unsigned case:}
Our algorithm reduces unsigned edge matching to the antidirected Eulerian path
problem in a directed graph, as solved in Section~\ref{sec:alternating euler}.
Given an instance of unsigned $1 \times n$ leg-contact isosceles-right-triangle
edge matching, we construct a directed graph $G$ as follows.
Create a vertex for each unique color that occurs on the legs of the tiles.
For every triangle $\TRITILER{u}{v}{H}$, create a directed edge $(u,v)$.

Any edge-matching solution consists of some ordering of the triangles
that they pack into the $1 \times n$ box, with triangles alternating between
being oriented with its hypotenuse on the top or bottom
(see Figure~\ref{fig:leg}), and consecutive triangles matching on their
shared edges.
We claim that such an edge-matching solution corresponds, by replacing each
tile with its corresponding edge in~$G$, to an antidirected Eulerian path
in~$G$.
First, the path must be antidirected:
following an edge $(u,v)$ in the forwards direction corresponds to placing
$\TRITILER{u}{v}{H}$ with its hypotenuse on the bottom (so colors $u$ and $v$
on the left and right, respectively), while following edge $(u,v)$ in the
reverse direction $(v,u)$ corresponds to placing the tile rotated
$180^\circ$ with its hypotenuse on the top (so colors $v$ and $u$ on the
left and right, respectively).
Second, the path must be Eulerian,
because an edge-matching solution must use every tile exactly once.

The last constraint to handle is the left and right boundary conditions.
If the left edge of the box has an acute angle at the bottom [top], then the
first tile must be placed with its hypotenuse on the bottom [top], so the first
edge of the antidirected Eulerian path must be forwards [backwards].
Similarly, if the right edge of the box has an acute angle at the bottom [top],
then the last tile must be placed with its hypotenuse on the bottom [top],
so the first edge of the antidirected Eulerian path must be forwards
[backwards].
These constraints are exactly what Corollary~\ref{cor:alternating euler}
handles in polynomial time.
By deciding whether $G$ has an appropriate antidirected Eulerian path,
we decide whether the edge-matching puzzle has a solution,
and an actual solution can be converted by the tile--edge correspondence.

\medskip

\textbf{Signed case:}
Our algorithm reduces signed edge matching to the forbidden-transition
antidirected Eulerian path problem in a directed graph,
as solved in Section~\ref{sec:forbidden euler}.
Given an instance of signed $1 \times n$ leg-contact isosceles-right-triangle
edge matching, we construct the same directed graph $G$ as the unsigned case.
To capture the color sign constraint on adjacent tiles,
we define forbidden transitions for the antidirected Eulerian path in~$G$.
Specifically, for each vertex corresponding to an unsigned color~$c$,
define four groups:
\begin{enumerate}
\item $P_{c,1}$ consists of all edges incoming to $c$
  corresponding to tiles of the form $\TRITILER{~}{+c}{}$;
\item $P_{c,2}$ consists of all edges incoming to $c$
  corresponding to tiles of the form $\TRITILER{~}{-c}{}$;
\item $P_{c,3}$ consists of all edges outgoing from $c$
  corresponding to tiles of the form $\TRITILER{+c}{~}{}$; and
\item $P_{c,4}$ consists of all edges outgoing from $c$
  corresponding to tiles of the form $\TRITILER{-c}{~}{}$.
\end{enumerate}

We claim that edge-matching solutions correspond to
forbidden-transition antidirected Eulerian paths in $(G,P)$.
Any antidirected path, when visiting a vertex $c$ not as a path endpoint,
will use either two incoming edges (groups $1$ and~$2$)
or two outgoing edges (groups $3$ and~$4$).
The forbidden transitions thus exactly prevent matching together two instances
of $c$ of the same sign.
Therefore Corollary~\ref{cor:forbidden alternating euler},
with the same start/end conditions as the unsigned case, solves the problem.
\end{proof}

\subsubsection{\#P-completeness of Leg-Contact Right-Isosceles-Triangle Edge Matching}
\label{right triangle count}

Even though leg-contact right-isosceles-triangle edge-matching puzzles
are not hard to solve, counting their solutions remains hard.

\begin{theorem}
  $1\times n$ signed and unsigned leg-contact right-isosceles-triangle
  edge-matching puzzles are \#P-complete.
\end{theorem}

\begin{proof}
  We reduce from counting the number of Eulerian cycles in an undirected graph,
  proved \#P-complete in \cite{countingEuler}.
  Given such an undirected graph $G$, we first add two vertices $s,t$ and
  attach them to an arbitrary vertex $v$ of~$G$,
  forming an undirected graph~$G'$.
  The number of Eulerian cycles in $G$ is exactly twice the number of Eulerian
  paths in $G'$ (whose endpoints are necessarily $s$ and $t$ --- which endpoint
  is the start of the path incurs the factor of~$2$).
  Thus we can reduce from counting the number of Eulerian paths in a graph
  $G'$ with two degree-$1$ vertices $s,t$.

  \medskip

  \textbf{Unsigned case:}
  For the endpoint vertices $s,t$, construct two corresponding triangles
  $$\TRITILER{U_1}{s}{H} \quad \text{and} \quad \TRITILER{U_2}{t}{H},$$
  where $s$ and $t$ are colors representing those vertices,
  $H$ is an arbitrary hypotenuse color, and
  $U_1$ and $U_2$ are globally unique colors.
  Because $U_1$ and $U_2$ appear only in these tiles,
  the tiles must be placed leftmost and rightmost in the puzzle
  (where the rightmost tile is rotated $180^\circ$).

  For each edge $e = \{u,v\}$ in~$G'$, construct two corresponding triangles
  $$\TRITILER{e}{u}{H} \quad \text{and} \quad \TRITILER{e}{v}{H},$$
  where $u,v$ are colors representing these vertices and
  $e$ is a color representing this edge.
  Because color $e$ appears only in these two tiles, these tiles must be placed
  together (with one of them rotated $180^\circ$),
  resulting in a parallelogram with left color $u$ and right color~$v$ or,
  rotating $180^\circ$, the same shape with left color $v$ and right color~$u$.
  Thus these two tiles (or the resulting parallelogram) simulates the
  edge $\{u,v\}$ that can be used in either direction.

  It follows that edge-matching solutions correspond bijectively to
  Eulerian paths in~$G'$.

  \medskip

  \textbf{Signed case:}
  For the endpoint vertices $s,t$, construct two corresponding triangles
  $$\TRITILER{U_1}{+s}{H} \quad \text{and} \quad \TRITILER{U_2}{+t}{H},$$
  where $s$ and $t$ are colors representing those vertices,
  $H$ is an arbitrary hypotenuse color, and
  $U_1$ and $U_2$ are globally unique colors, forcing these tiles
  to be placed leftmost and rightmost in the puzzle.

  For each vertex $v \notin \{s,t\}$ in $G'$, which has even degree~$k$,
  construct $k/2$ copies of two corresponding triangles
  $$\TRITILER{-v_X}{+v}{H} \quad \text{and} \quad \TRITILER{+v_X}{+v}{H},$$
  where $v,v_X$ are two colors corresponding to vertex~$v$.
  Because $v_X$ appears only in these two triangles,
  they must be placed together (with one of them rotated $180^\circ$)
  to match up the $v_X$-color edges,
  resulting in a parallelogram with end colors $+v$ and~$+v$.

  For each edge $e = \{u,v\}$ in~$G'$, construct two corresponding triangles
  $$\TRITILER{-e}{-u}{H} \quad \text{and} \quad \TRITILER{+e}{-v}{H},$$
  where $u,v$ are colors representing these vertices and $e$ is a color
  representing this edge.
  (This construction depends slightly on how we distinguish the endpoints of $e$
  as $u$ and $v$, but the choice can be made arbitrarily for each edge
  without affecting the rest of the construction.)
  Because color $e$ appears only in these two tiles, these tiles must be placed
  together (with one of them rotated $180^\circ$),
  resulting in a parallelogram with left color $-u$ and right color~$-v$ or,
  rotating $180^\circ$, the same shape with left color $-v$ and right color~$-u$.

  By the signs of the colors, any edge-matching solution must alternate
  between edge parallelograms and vertex parallelograms, starting and ending
  with edge parallelograms, surrounded by the $s$ and $t$ triangles.
  It follows that edge-matching solutions correspond to Eulerian paths in~$G'$.

  This reduction is not parsimonious.
  Each vertex parallelogram (with the same external colors of~$+v$)
  can be formed in two ways, blowing up the number of solutions
  by a factor of~$2$.
  If $G'$ has $m$ edges, then there are
  $m-1 = \sum_{v \notin \{s,t\}} \degree(v)/2$ such vertex parallelograms,
  for a total blowup of $2^{m-1}$.
  Furthermore, if we do not treat copies of the vertex tiles as identical,
  then the $k/2$ copies of each degree-$k$ vertex tile can be permuted
  arbitrarily, blowing up the number of solutions by a factor of $(k/2)!^2$.
  The total blowup is thus $c = 2^{m-1} \prod_v (\degree(v)/2)!^2$,
  an easy-to-compute constant, making the reduction $c$-monious.
\end{proof}

\section{Shapeless Edge Matching}
\label{sec:shapeless}

In this section, we analyze the complexity of the following problems:
\begin{definition}
\defn{Signed/unsigned shapeless edge matching} is the following problem:
given a set of $n$ unit square tiles where each edge of each tile is given a color (and a sign in the signed case), can the tiles be laid out in any configuration in the plane such that the overall arrangement is connected via edges, and all edge-to-edge contacts between tiles are compatible?  In the \defn{rooted} variant, the problem specifies a single tile to be fixed at the origin in a specified orientation.
\end{definition}
The distinguishing feature of this problem, compared to the rectangular edge-matching problems for which hardness is already known, is that the target shape is not specified, so there is no constraint on the spatial footprint of a solution. We will show that shapeless edge matching is NP-complete and rooted shapeless edge matching is ASP-complete and \#P-complete, by reduction from $1 \times n$ edge matching with specified left boundary color, which was proved NP-complete by \cite{bosboom2017even} and proved ASP/\#P-complete in Section~\ref{sec:ASP} of this paper (for both the signed and unsigned cases).

\subsection{Shapeless Edge Matching NP-completeness}

\begin{theorem} \label{thm:shapeless}
Signed and unsigned shapeless edge-matching puzzles are NP-complete.
\end{theorem}

\begin{proof}
A shapeless edge-matching solution can clearly be checked in polynomial time,
so shapeless edge matching is in NP.

To prove NP-hardness, we reduce from $1 \times n$ edge matching with specified left boundary color.
Suppose we are given an instance consisting of a set $T$ of $n$ tiles
(signed or unsigned) and a single color $L$ denoting the color of the left
boundary edge of the $1 \times n$ target box.
We will produce a shapeless edge-matching instance consisting of tile set
$T \cup T'$, where $|T'| = O(|T|) = O(n)$.

We design tile set $T'$ to force these tiles into a rectangular frame
structure that simulates a $1 \times n$ box.
Figure~\ref{fig:shapeless tiles} lists the tiles,
and Figure~\ref{fig:forced frame} shows their intended placement.
We use four new colors $\{TW,RW,BW,LW\}$ that appear positively and negatively
(or in the unsigned case, without signs);
each instance of $U$ represents a globally unique (and hence unmatchable) color.

\begin{figure}
	\centering
	\includegraphics[scale=1]{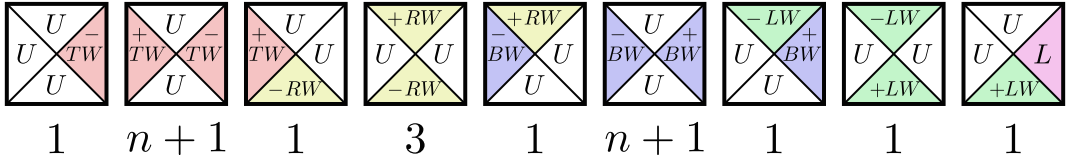}
	\caption{Frame tile set, each labeled by their multiplicity.}
	\label{fig:shapeless tiles}
\end{figure}

\begin{figure}
	\centering
	\includegraphics[scale=1]{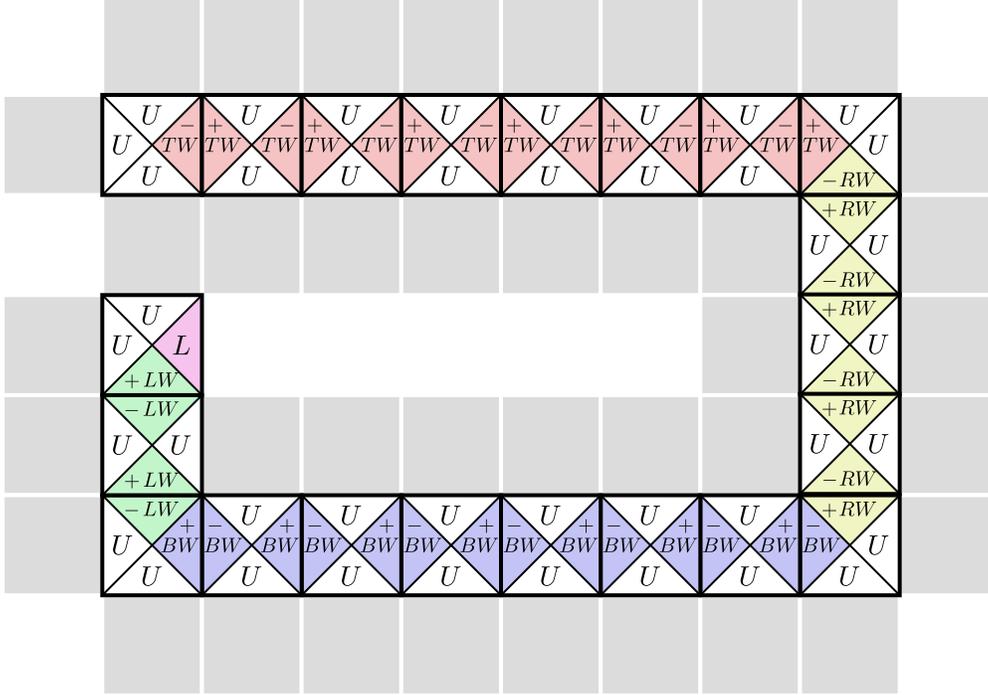}
	\caption{Frame tiles laid out in their forced positions for $n=5$. Grey squares show regions which cannot be occupied by further tiles because they are adjacent to $U$-colored edges.}
	\label{fig:forced frame}
\end{figure}

Next we show that the frame tiles in $T'$ must be positioned to form the frame
shown in Figure~\ref{fig:forced frame}.
Our proof mentions signed tiles, but does not depend on these signs,
so works equally well in the unsigned case by dropping signs from all tiles.
Consider the outer cap \TILE{U}{U}{-TW}{U}. Because the overall arrangement of tiles must be connected but edges colored $U$ are unmatchable, the outer cap's edge colored $TW$ must be adjacent to either a top-wall tile \TILE{+TW}{U}{-TW}{U} or the top-right corner \TILE{+TW}{U}{U}{-RW}, the only other tiles with edges colored $TW$.
If the top-right corner were adjacent to the outer cap, it would be impossible to connect any of the $n+1$ top-wall tiles, as there would be no further way to expose an edge colored $TW$ (of either sign).
By induction, all top-wall tiles are forced to be placed in a row adjacent to the outer cap before the top-right corner is placed, being the only remaining tile with an edge colored $TW$.
By the same argument, the right-wall tiles \TILE{U}{+RW}{U}{-RW} and bottom-right corner \TILE{-BW}{+RW}{U}{U} are the only tiles with edges colored $RW$, thus following the top-right corner must be the three right-wall tiles and then the bottom-right corner, and similarly along the bottom wall and left wall, terminating with the left boundary tile \TILE{U}{U}{L}{+LW} as the final frame tile, forming the frame as desired.

Finally, we show that
the shapeless edge-matching puzzle $T \cup T'$ has a solution if and only if
the corresponding $1 \times n$ edge-matching instance $T$ has a solution.
The forced arrangement of frame tiles only exposes edges colored with an
unmatchable $U$ color, except for the single exposed edge colored~$L$.
Thus the input tiles of $T$ must connect to the frame through that single edge.
Figure~\ref{fig:forced frame} shows that the only available region
in which to arrange the tiles of $T$ is within a $1 \times n$ box
with its leftmost boundary colored~$L$.
\end{proof}

\subsection{Shapeless Edge Matching ASP/\#P-completeness}

\begin{corollary}
Signed and unsigned rooted shapeless edge-matching puzzles are ASP-complete and \#P-complete.
\end{corollary}

\begin{proof}
For ASP/\#P-completeness, we reduce from the rooted variant of shapeless edge matching (which specifies the position and orientation of one tile) to avoid the infinite number of translations as well as global rotations.
We follow a similar reduction as the proof of Theorem~\ref{thm:shapeless},
but modified so that the frame has a unique construction, making the reduction
parsimonious.
By Theorem~\ref{thm:asp}, $1 \times n$ signed/unsigned edge matching with specified left boundary color is ASP/\#P-completeness,
so this parsimonious reduction gives us ASP/\#P-completeness for signed/unsigned shapeless edge matching.

The only degree of freedom in Theorem~\ref{thm:shapeless}'s frame construction
is the ordering of the wall tiles along each wall. In order to fix their order,
we create distinct tiles for each position along the wall, and give them
each unique colors only shared with their neighbors in that ordering.
For example, we modify the upper wall to consist of $n+1$ unique upper-wall
tiles and a suitably modified outer cap and upper-right corner as follows:
$$
\TILE{U}{U}{-TW_1}{U}
\quad
\TILE{+TW_1}{U}{-TW_2}{U}
\quad \cdots \quad
\TILE{+TW_i}{U}{-TW_{i+1}}{U}
\quad \cdots \quad
\TILE{+TW_{n+1}}{U}{-TW_{n+2}}{U}
\quad
\TILE{+TW_{n+2}}{U}{U}{-RW_1}.
$$
Applying the same modification to the other walls and corners gives us a frame
that has a unique construction, and thus the number of solutions to the
shapeless edge-matching instance corresponds exactly to the number of solutions
to the original $1 \times n$ edge-matching puzzle with specified left boundary.
\end{proof}

\section{2-player $1 \times n$ Edge Matching}
\label{sec:2-player}

In this section, we prove PSPACE-hardness for 2-player variants of
$1 \times n$ edge matching.
In Section~\ref{sec:par}, we introduce and analyze a new variant of geography
called partizan geography.
Then in Section~\ref{sec:2-player reductions}, we reduce from geography
and our new variant to 2-player $1 \times n$ edge matching.

\subsection{Partizan Geography}
\label{sec:par}

\defn{Geography} (also called generalized geography) is a game played on a
directed or undirected graph with a designated start vertex.
In \defn{vertex geography} \cite{lichtenstein1980go,fraenkel1993undirected},
players take turns moving from the current vertex
to a neighboring vertex that has not been visited,
with the player who cannot move losing.
In \defn{edge geography}
\cite{DBLP:journals/jcss/Schaefer78,fraenkel1993undirected},
revisiting vertices is allowed, but each edge can be used only once.
In all four variants, directed/undirected vertex/edge geography,
the decision question is whether the first player has a winning strategy.
Undirected vertex geography can be solved in polynomial time
\cite{fraenkel1993undirected}, while all three other versions are
PSPACE-complete \cite{lichtenstein1980go,DBLP:journals/jcss/Schaefer78,fraenkel1993undirected}.

We introduce partizan versions of geography, where the available moves
depend on which player is moving next.
In \defn{$X\!$ $Y\!$-partizan $Z\!$ geography}, with
$X \in \{\text{directed}, \text{undirected}\}$ and
$Y,Z \in \{\text{vertex}, \text{edge}\}$,
players take turns in an $X$ graph extending a shared path,%
\footnote{Fraenkel and Simonson~\cite{DBLP:journals/tcs/FraenkelS93} analyze
  ``path-construction games'' with two paths, with partizan and
  impartial variants that specify which paths each player is allowed to extend.
  Tron~\cite{DBLP:conf/fun/Miltzow12} is another PSPACE-complete two-player
  two-path game.  By contrast, partizan geography is about two players
  building a single path (like geography).}
playing only $Y$s of their color while not repeating any $Z$ already visited.
For example, in edge-partizan vertex geography, players can play only edges of
their color that lead to a vertex not already visited.
We give a complete characterization for $X$ $Y$-partizan $Z$ geography for all
combinations of $X,Y,Z$, as summarized in Table~\ref{tab:partizan-geography}.

First we need a result about (impartial) geography that has been widely
assumed, but to the best of our knowledge, not explicitly proved in the
literature:

\begin{theorem} \label{thm:bipartite-directed-edge-geography}
Directed edge geography remains PSPACE-hard even when restricted to bipartite planar graphs with maximum degree 3 and maximum in/outdegree 2.
\end{theorem}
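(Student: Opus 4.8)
The plan is to reduce from generalized geography, i.e.\ directed \emph{vertex} geography, which is PSPACE-complete even on planar bipartite graphs of maximum degree $3$ (Lichtenstein and Sipser). Containment of directed edge geography in PSPACE is standard, since the game lasts at most $|E|$ moves, so I focus on hardness. Given a vertex-geography instance $G$ with token start $r$, I build an edge-geography instance $G'$ by two local replacements: \textbf{split} each vertex $v$ into two vertices $v_{\mathrm{in}}, v_{\mathrm{out}}$ joined by a single \emph{life edge} $v_{\mathrm{in}} \to v_{\mathrm{out}}$, routing all edges formerly entering $v$ into $v_{\mathrm{in}}$ and all edges formerly leaving $v$ out of $v_{\mathrm{out}}$; and \textbf{subdivide} each original directed edge $u \to v$ once, replacing it by $u_{\mathrm{out}} \to m_{uv} \to v_{\mathrm{in}}$. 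The token in $G'$ starts at $r_{\mathrm{out}}$. The intent is that a single vertex-geography move (sliding along $u\to v$, which forbids ever revisiting $v$) is simulated by the length-$3$ forced walk $u_{\mathrm{out}} \to m_{uv} \to v_{\mathrm{in}} \to v_{\mathrm{out}}$ in $G'$, with the single-use life edge playing the role of ``$v$ has been visited.''

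The crux of the correctness argument is to show that this simulation is faithful, which turns on two parity facts. First, each vertex-geography move becomes an \emph{odd} number (three) of edge-geography moves, so the to-move player alternates in lockstep between the two games; in particular the player choosing the successor of a vertex (at $v_{\mathrm{out}}$) is the same player who ``arrives'' at $v$ in $G$. Second, and most importantly, the subdivision is what prevents the \emph{stranding exploit}: if one instead split vertices without subdividing, a player could move into an already-consumed vertex gadget and leave the \emph{opponent} stuck on the dead life edge, winning a position that should be lost. With the extra midpoint, moving toward a vertex $v$ whose life edge is already used forces the steps $u_{\mathrm{out}} \to m_{uv}$ (initiator), $m_{uv}\to v_{\mathrm{in}}$ (opponent), and then leaves the \emph{initiator} to move at $v_{\mathrm{in}}$ with its only out-edge (the dead life edge) unavailable; hence the initiator is the one who gets stuck and loses. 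This exactly reproduces the vertex-geography rule that moving to an already-visited vertex is not an option, so it loses precisely when it is the only option. A fresh sink (out-degree $0$) behaves correctly for the opposite reason: its life edge is still available, so the forced walk ends with the \emph{opponent} stuck at $v_{\mathrm{out}}$, matching the fact that moving onto a sink wins in vertex geography. I would package these observations into a bijection between plays, and hence between winning strategies, of the two games.

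Next I verify the structural constraints. Planarity is immediate because split and single-subdivision are local, planarity-preserving replacements. Bipartiteness holds because every original edge expands to a path of odd length $3$, so every (necessarily even) cycle of the bipartite input becomes an even cycle of $G'$, and a two-coloring by BFS parity has no conflict. For the degree bounds, a midpoint has in- and out-degree $1$; $v_{\mathrm{in}}$ has in-degree $d^-(v)$ and out-degree $1$; and $v_{\mathrm{out}}$ has in-degree $1$ and out-degree $d^+(v)$. Since the input has total degree at most $3$, every vertex that is neither a pure source nor a pure sink automatically satisfies $d^-(v)\le 2$ and $d^+(v)\le 2$, giving the desired maximum degree $3$ with maximum in- and out-degree $2$. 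Pure sources and sinks of in-/out-degree $3$ (if present) are removed by a light preprocessing of $G$ that fans their incident edges through forced degree-$2$ vertices, preserving the game.

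The step I expect to be the main obstacle is pinning down the stranding issue rigorously: verifying, across all cases (alive interior vertex, dead interior vertex, fresh sink, the start vertex $r$, and positions where a player is forced to move toward a dead vertex because all alternatives are also dead), that the forced-walk structure makes the \emph{initiator} bear the consequence of a bad move, so that the set of legal-and-nonlosing moves in $G'$ is in exact correspondence with the set of legal moves in $G$. Getting the buffer parity right --- the life edge must sit at an odd position along the buffer so that the initiator, not the opponent, is the one stranded --- is the single design choice on which the whole reduction hinges, and it is precisely why exactly one subdivision (buffer length $3$) is used rather than zero.
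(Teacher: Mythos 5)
Your construction is the same as the paper's: the paper replaces each vertex by a gadget consisting of one entry (subdivision) vertex per incoming edge feeding into a single forced internal edge, which is exactly your ``split $v$ into $v_{\mathrm{in}}\to v_{\mathrm{out}}$ and subdivide each edge once,'' with the identical three-moves-per-step parity argument, the same initiator-gets-stranded analysis of re-entry, and the same degree/planarity/bipartiteness checks. The one detail to nail down (which the paper's equally terse writeup also glosses over) is the start vertex: its life edge $r_{\mathrm{in}}\to r_{\mathrm{out}}$ must be deleted or pre-marked as used, since otherwise a player could legally ``revisit'' $r$ via a fresh life edge, breaking the correspondence with vertex geography where $r$ counts as visited.
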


Problem GP2 in Garey and Johnson~\cite{Garey-Johnson-1979} is called
simply ``Generalized Geography'', but its decision question describes
directed edge geography, and they cite Schaefer's paper
\cite{DBLP:journals/jcss/Schaefer78} which gives a PSPACE-hardness proof.
But Garey and Johnson also cite Lichtenstein and
Sipser~\cite{lichtenstein1980go} to add the bipartite, planar, and degree restrictions on the graph, apparently overlooking that the latter paper is about vertex geography.
This claim and citation pair have been repeated, including in Fraenkel et al.'s paper on undirected geography~\cite{fraenkel1993undirected}, though Bodlaender~\cite{DBLP:journals/tcs/Bodlaender93} correctly distinguishes.

\begin{proof}
Directed vertex geography is PSPACE-hard on bipartite planar graphs with maximum degree 3 and maximum in/outdegree 2~\cite{lichtenstein1980go}.  We reduce from vertex to edge geography by replacing each vertex (with any number of incoming and outgoing edges) with the gadget shown in Figure~\ref{fig:bipartite-vertex-to-edge}.  This gadget is bipartite and planar, and it has the same maximum indegree and outdegree as the vertex it replaces.

\begin{figure}
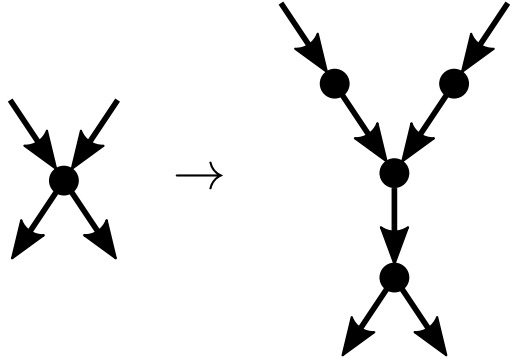

$$
\vcenter{\hbox{\includegraphics[scale=1.5]{bipartite-vertex-to-edge-geography-notation}}}
\qquad\vcenter{\hbox{\scalebox{2}{$\to$}}}\qquad
\vcenter{\hbox{\includegraphics[scale=1.5]{bipartite-vertex-to-edge-geography}}}
$$
\caption{Gadget simulating vertex geography in edge geography}
\label{fig:bipartite-vertex-to-edge}
\end{figure}

If player 1 plays any of the incoming edges to this gadget, the next two moves are forced; then it is player 2's turn to play one of the outgoing edges.  Once the gadget has been traversed, playing any of the remaining incoming edges loses (because the central edge has already been played).  Thus this gadget correctly simulates a vertex in the vertex geography instance.
\end{proof}

\begin{theorem} \label{thm:vertex-partizan}
Vertex-partizan geography is equivalent to geography in bipartite graphs.  Specifically:
\begin{itemize}
\item Directed vertex-partizan vertex geography and directed vertex-partizan edge geography are PSPACE-complete even when restricted to bipartite planar graphs with maximum degree 3 and maximum in/outdegree 2.
\item Undirected vertex-partizan vertex geography and undirected vertex-partizan edge geography can be solved in polynomial time.
\end{itemize}
\end{theorem}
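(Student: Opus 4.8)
The plan is to derive both halves of the theorem from a single structural observation: vertex-partizan $Y$-geography is, move for move, the same game as ordinary (impartial) $Y$-geography played on a bipartite graph whose two parts are the two color classes. I would first make this correspondence precise in both directions. In the \emph{forward} direction, given any bipartite graph $H$ with parts $A,B$ and start vertex $s\in B$, impartial geography forces the path to alternate between $A$ and $B$, so player~1's moves always land in $A$ and player~2's always land in $B$; coloring $A$ with player~1's color and $B$ with player~2's color turns $H$ into a vertex-partizan instance with identical legal plays and identical winner. In the \emph{backward} direction, given a vertex-partizan instance $(G,\mathrm{col})$, I observe that because the players alternate and each may move only onto a vertex of its own color, every edge traversed after the first joins a color-1 and a color-2 vertex; deleting all monochromatic edges therefore yields a bipartite graph $G'$ on which impartial geography reproduces the partizan game exactly. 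Both directions leave the graph structure essentially untouched, so planarity and degree bounds are preserved.

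With this equivalence in hand, the directed PSPACE-completeness is almost immediate. For containment, all four variants are geography games of polynomially bounded depth, hence in PSPACE. For hardness I would invoke that directed (impartial) vertex geography is PSPACE-hard on bipartite planar graphs with maximum degree~3 and maximum in-/out-degree~2 \cite{lichtenstein1980go}, and that directed edge geography is PSPACE-hard on the same class by Theorem~\ref{thm:bipartite-directed-edge-geography}. Applying the forward direction to these hard bipartite instances---simply recoloring the bipartition as the two players' colors, adding and removing no edges---yields PSPACE-hard vertex-partizan directed vertex and edge geography instances in the same planar, degree-bounded class.

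For the undirected variants I would use the backward direction to pass to impartial geography on the bipartite graph $G'$, and then cite polynomial algorithms: undirected vertex geography is polynomial via maximum matching, and undirected edge geography, though PSPACE-complete in general, is polynomial on bipartite graphs \cite{fraenkel1993undirected}, which is exactly where $G'$ lands. The main obstacle is securing the undirected \emph{edge} case: the tempting shortcut of reducing edge geography to (polynomial) vertex geography on the line graph fails, because a line-graph walk does not remember which endpoint the play currently occupies and so permits illegal endpoint switches; hence one genuinely needs the bipartite-specific polynomial result (via a matching/$T$-join argument) rather than a reduction to vertex geography. A secondary technical wrinkle, which I would dispatch with care, is the start vertex in the backward reduction for edge geography: if $s$ is itself color~1 then the legal first move traverses a monochromatic edge, and since edge geography allows revisiting $s$ one cannot simply delete that edge or relocate $s$ without risking a turn-parity mismatch on a later revisit. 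I would resolve this by attaching a short fixed start gadget (or assuming without loss of generality that $s$ carries the color of the player who does not move first) and checking that the bijection between plays is preserved; for vertex geography no such wrinkle arises, since $s$ is never revisited.
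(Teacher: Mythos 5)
Your proposal is correct and follows essentially the same route as the paper: recolor the bipartition to get the forward direction, delete monochromatic edges to get the backward direction, then cite Lichtenstein--Sipser and Theorem~\ref{thm:bipartite-directed-edge-geography} for directed hardness and Fraenkel et al.\ for the undirected polynomial cases. Your extra attention to the start-vertex color in the backward direction for edge geography is a legitimate technical point that the paper's proof glosses over, but it does not change the argument's structure.
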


\begin{proof}
Given a bipartite geography instance, coloring the vertices according to the bipartition produces a vertex-partizan game with the same winner.  Conversely, no monochromatic edges in a vertex-partizan instance can be played because the players alternate moves, so those edges can be deleted without changing the winner.  The resulting graph is bipartite, with each partition containing only vertices of a single player's color.  Thus the problems are equivalent.

Directed vertex geography in bipartite planar maximum-degree-3 maximum-in/outdegree-2 graphs is proved PSPACE-complete in \cite{lichtenstein1980go} and Theorem~\ref{thm:bipartite-directed-edge-geography} extends this to directed edge geography in the same class of graphs.  Undirected vertex geography (in all graphs) and bipartite undirected edge geography are both polynomial~\cite{fraenkel1993undirected}.  All of these results carry over directly to vertex-partizan geography.
\end{proof}

\begin{theorem} \label{thm:edge partizan}
Edge-partizan geography (of all kinds) is PSPACE-complete even when restricted to bipartite planar graphs with maximum degree 3 and maximum in/outdegree 2.
\end{theorem}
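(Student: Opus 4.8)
The plan is to prove all four edge-partizan variants PSPACE-complete by a single uniform reduction from the corresponding \emph{impartial directed} geography problem on the same restricted graph class, exploiting the fact that these hard instances are bipartite. For edge geography I would reduce from directed edge geography, shown PSPACE-hard on bipartite planar max-degree-3 max-in-/out-degree-2 graphs in Theorem~\ref{thm:bipartite-directed-edge-geography}; for vertex geography I would reduce from directed vertex geography on the same class, which is Lichtenstein and Sipser's result~\cite{lichtenstein1980go}. Membership in PSPACE is immediate, since the shared path never repeats a vertex (or edge), so the game lasts polynomially many moves and the usual alternating recursion fits in polynomial space.

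The key observation is that in a bipartite graph with bipartition $(A,B)$, if the path starts at a vertex of $A$ then after any even number of moves the head lies in $A$ and after any odd number it lies in $B$. Because the players alternate, the first player (who moves from positions $0,2,4,\dots$) always sits on an $A$-vertex and the second player always sits on a $B$-vertex. Hence I would color every edge directed from $A$ to $B$ with the first player's color and every edge directed from $B$ to $A$ with the second player's color. In the resulting edge-partizan instance, the first player can only ever traverse $A\to B$ edges and the second player only $B\to A$ edges, which is exactly the set of moves each player has in the original impartial directed game, so the two games have identical legal moves, identical length, and the same winner.

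For the two \emph{directed} edge-partizan variants this coloring leaves the graph untouched, so the equivalence is immediate and PSPACE-hardness follows at once. For the two \emph{undirected} variants I would additionally forget the orientations, keeping each (now undirected) edge's assigned color; a pair of antiparallel directed edges simply becomes two parallel undirected edges of opposite color, which I can keep as a multigraph or subdivide away without affecting play. The underlying undirected graph is still bipartite, planar, and of maximum degree $3$ (since in-degree plus out-degree is at most $3$), so the graph class is preserved. The main obstacle — and the one genuinely subtle step — is to verify that discarding the orientations does not create new legal moves: a first-player edge $\{a,b\}$ with $a\in A,\ b\in B$ can only be used $a\to b$, because the first player (the only one allowed to use that color) is always on an $A$-vertex and never on $b$, and symmetrically for the second player's edges. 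Thus the parity invariant silently restores the lost directionality, and the colored undirected game again coincides move-for-move with the impartial directed game.

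This also dovetails with the impartial landscape: impartial undirected vertex geography and bipartite impartial undirected edge geography are both polynomial~\cite{fraenkel1993undirected}, so it is precisely the edge coloring — encoding direction through the bipartition parity — that reintroduces the hardness of the directed problems. I would close by remarking, exactly as in the vertex-partizan theorem, that a connected instance fixes its bipartition up to swapping $A$ and $B$, so the coloring is well defined once the start vertex is placed in $A$, while components unreachable from the start are irrelevant to play and may be colored arbitrarily.
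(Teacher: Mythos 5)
Your proof is correct, and for the directed variants it coincides with the paper's argument (color each $A\to B$ edge with one player's color and each $B\to A$ edge with the other's, and observe that alternation of turns matches alternation of sides). For the undirected variants, however, you take a genuinely different and more economical route. The paper does \emph{not} simply erase orientations: it reduces edge-partizan directed geography to edge-partizan undirected geography by replacing each directed edge with a small simulation gadget --- an undirected path of forced moves plus a pendant leaf that lets the opponent win immediately if anyone tries to traverse the simulated edge backwards. Your observation is that no such gadget is needed, because the bipartition parity invariant (the player owning the $A\to B$ color is only ever at $A$-vertices, and symmetrically) already makes backward traversal impossible once colors are assigned; the graph, its planarity, its bipartiteness, and its degree bound are left untouched. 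What your shortcut buys is simplicity and the absence of any graph modification; what the paper's gadget buys is generality --- it turns \emph{any} edge-partizan directed geography instance into an undirected one, without assuming the source is bipartite --- though since the theorem only claims hardness on bipartite instances, that extra generality is not needed here. One small blemish: your parenthetical suggestion to ``subdivide away'' parallel edges arising from antiparallel directed pairs does not work, since subdividing an edge changes whose turn it is at the far endpoint and strands the opponent at the midpoint; your primary suggestion of keeping the multigraph (or noting that the hard instances produced by Theorem~\ref{thm:bipartite-directed-edge-geography} contain no antiparallel pairs) is the right fix.
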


\begin{proof}
Given an (impartial) bipartite directed vertex/edge geography instance, we can color the vertices red and blue, so (by bipartiteness) every edge is from red to blue or from blue to red.  Color the first type of edge red and the second type of edge blue.  Because every path alternates vertex colors, every path also alternates edge colors, so adding the edge-partizan constraint does not prohibit any paths.  Thus bipartite directed geography reduces to directed edge-partizan geography.

We can reduce directed edge-partizan geography to undirected edge-partizan geography using the directed-edge-simulation gadget in Figure~\ref{fig:directed-to-undirected-edge-partizan}.  When the blue player plays the left edge, the red and blue player's next moves are forced; then it is the red player's turn at the right vertex.  If blue tries to play the simulated edge backwards (starting at the right vertex), then red can immediately win using the leaf.

\begin{figure}
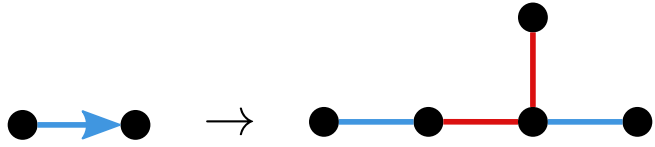

  $$
  \vcenter{\hbox{\includegraphics[scale=1.5]{directed-to-undirected-edge-partizan-notation}}}
  \qquad\vcenter{\hbox{\scalebox{2}{$\to$}}}\qquad
  \raisebox{-2pt}{\includegraphics[scale=1.5]{directed-to-undirected-edge-partizan}}
  $$
  \caption{A gadget simulating a directed edge with undirected edges. (Exchange colors to simulate a red edge.)}
  \label{fig:directed-to-undirected-edge-partizan}
\end{figure}

Thus all edge-partizan geography games are PSPACE-complete even when restricted to bipartite planar graphs with maximum degree 3 and maximum in/outdegree 2, again carrying through the results in \cite{lichtenstein1980go} and Theorem~\ref{thm:bipartite-directed-edge-geography}.
\end{proof}

\subsection{Reduction from Geography to 2-player $1\times n$ Edge Matching}
\label{sec:2-player reductions}

In this section, we analyze the complexity all four variants of the following
2-player edge-matching game:

\begin{definition}
In the \defn{2-player signed/unsigned edge-matching game}, two players play on
a $1 \times n$ board where the left boundary edge has a specified
(possibly signed) color.  Also given are $n$ square tiles, where each tile
$T_i = \TILE{a_i}{b_i}{c_i}{d_i}$ consists of four (possibly signed) edge colors.
In two variants, the players draw from a \defn{shared pool}
(any player can choose any tile) or from their \defn{own pools}
(each player can choose a tile only from their own pool).
The players take turns making the following type of move:
choose an unused tile from the available pool,
choose one of the four rotations of the tile, and
place the rotated tile in the leftmost unoccupied position of the board.
A move is valid only if the tile's left edge is compatible with the edge to
its left (on the right of the previously played tile or the edge of the board).
If a player has no valid move, then that player loses and the other player wins.
The decision problem is to determine whether the first player can force a win.
\end{definition}

First we present a proof similar to the proof of Theorem~\ref{thm:asp},
although its results are subsumed by the following theorem.

\begin{theorem}
If players draw from a shared pool of tiles, which can be signed or unsigned, the 2-player edge-matching game is PSPACE-complete.
\end{theorem}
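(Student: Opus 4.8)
The plan is to reduce from edge geography on the graph classes where it is already known to be PSPACE-hard, exploiting the fact that the 2-player matching game is itself a path-building game: the color currently exposed on the right end of the played row behaves exactly like the ``current vertex'' of a geography token, placing a tile plays (and uses up) an edge incident to that vertex, and a player with no legal placement loses exactly as a player with no legal move loses. For the unsigned variant I would reduce from \emph{undirected} edge geography, which is PSPACE-complete on general graphs~\cite{fraenkel1993undirected}; for the signed variant I would reduce from \emph{directed} edge geography, which is PSPACE-hard even on the bipartite planar max-degree-3, max-in/out-degree-2 graphs of Theorem~\ref{thm:bipartite-directed-edge-geography}. In both cases the construction is one tile per edge, mirroring the single-player edge-to-tile idea behind Theorem~\ref{thm:asp}.

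Concretely, give each vertex $w$ of the geography graph a color, and represent each edge $\{u,v\}$ (resp.\ directed edge $(u,v)$) by a single tile whose left and right edges carry the two endpoint colors and whose top and bottom edges carry fresh unmatchable colors $U$. Putting $U$ on the top and bottom forces orientation: the $90^\circ$ and $270^\circ$ rotations expose $U$ on the left, which can never match the current (real) vertex color, so only the $0^\circ$ and $180^\circ$ rotations are ever placeable. In the unsigned case the tile is \TILE{u}{U}{v}{U}, and its two placeable rotations realize the two directions of traversing the undirected edge $\{u,v\}$, exactly as desired. In the signed case I would impose a sign discipline in which the right end of the row always exposes a color $+w$ meaning ``currently at $w$, about to leave''; taking the tile for $(u,v)$ to be \TILE{-u}{U}{+v}{U} makes its $0^\circ$ placement legal precisely when the current vertex is $u$ (the exposed $+u$ matches $-u$), after which $+v$ is exposed, while the $180^\circ$ rotation would require an exposed $-v$ that never arises---so the edge can only be played forward. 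Finally I set the board's specified leftmost color to the start vertex (to $c_s$ in the unsigned case, to $+s$ in the signed case), so the legal first moves are exactly the edges available at the start.

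With these gadgets in place, the correspondence is immediate: each maximal sequence of placements is a walk from the start vertex that never repeats a tile, i.e.\ a legal play of edge geography, and conversely every partial geography play lifts to a unique sequence of legal placements. Because placements and geography moves are in one-to-one, turn-by-turn correspondence, the two games have identical game trees up to relabeling; in particular the move parity agrees and a position is stuck in one game iff the corresponding position is stuck in the other. Hence the first player can force a win in the matching game iff the first player can force a win in the geography game, giving PSPACE-hardness for both the signed and unsigned shared-pool variants.

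The main obstacle is the faithfulness of the gadget, i.e.\ showing that the placeable moves are \emph{exactly} the legal geography moves, with neither spurious nor missing options. This has two parts: ruling out unintended placements (the $U$-forced orientation argument, and, in the signed case, verifying that the ``about to leave'' sign invariant is maintained by every legal move, so that reverse traversals are impossible and no tile can ever expose $U$ on its right and prematurely dead-end the row), and confirming that the shared-pool nature of both games matches, so that neither player gains access to moves the other lacks. Once this exact-correspondence lemma is established, turn parity and the ``no move loses'' convention carry the winner across unchanged.
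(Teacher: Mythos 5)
Your proposal is correct, but it takes a genuinely different route from the paper. The paper's proof reuses the three-tiles-per-vertex construction from Theorem~\ref{thm:asp} and reduces from directed \emph{vertex} geography: since each vertex contributes three consecutive tiles, the two players alternate placing the ``first'' tile of each vertex, and the only real decision (which out-edge tile to place second) falls to the player who did not claim that vertex; the argument then tracks turn parity across these triples to show the last mover agrees with the geography winner. You instead reduce from \emph{edge} geography with one tile per edge (\TILE{u}{U}{v}{U} unsigned, \TILE{-u}{U}{+v}{U} signed), using undirected edge geography~\cite{fraenkel1993undirected} for the unsigned case and directed edge geography (Theorem~\ref{thm:bipartite-directed-edge-geography}) for the signed case. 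Your key checks --- that the $U$ sides kill the $90^\circ$/$270^\circ$ rotations, and that the ``exposed color is always $+w$'' invariant kills the backward placement in the signed case --- do hold, and they give a move-for-move game-tree isomorphism, so parity and the normal-play convention transfer trivially. What each approach buys: yours is more transparent (no bookkeeping about who places the second tile of a triple, no two-move delay when a player steers into a dead end), at the cost of needing two different source problems for the two cases; the paper's approach recycles its ASP gadgetry and, because it starts from vertex geography on \emph{bipartite} graphs, sets up the partitioning of tiles into per-player pools that it needs for the separate-pool variants in the subsequent theorems.
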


\begin{proof}
We reduce from directed vertex geography in graphs with maximum degree 3, which was proved PSPACE-hard in~\cite{lichtenstein1980go}.
Our reduction is the same as the reduction used in the proof of 1-player ASP-completeness in Theorem~\ref{thm:asp}, whose tiles are shown in Figure~\ref{fig:asp}. In the proof of Theorem~\ref{thm:asp}, three tiles are placed for each vertex, so if two players alternate placing tiles, then they alternate placing the first tile for each vertex, which corresponds to taking that vertex in the geography game. In the same proof, the only choices are which tile to place second for each vertex of outdegree 2 (the first tile is fixed, and the unchosen tile must be placed third), a choice which the player who did not place the first tile for that vertex can make and which determines the next visited vertex in the tile-placing game. Correspondingly, in the geography game, when one player chooses a vertex, the player who did not choose that vertex chooses the next visited vertex. Finally, the winner of the tile-placing game is the last player to place a tile. Each vertex has three tiles which are always placed in sequence, so the last player to place a tile is the last player to place the first tile for a vertex, which corresponds to the last player to pick a vertex in the geography game. So the winner of the geography game is the winner of the tile-placing game, as desired.
\end{proof}

The same proof almost works in the case where the players draw from their own
pools of tiles if we reduce from directed \emph{vertex-partizan} vertex
geography, because then we know which player places the first tile
for each vertex.
However, the other player needs to be able to choose the second tile for each
vertex, and then the original player needs to be able to choose the remaining
third tile, meaning we do not know which pools should have those two tiles.
In fact, there is an even simpler proof that avoids this problem:

\begin{theorem}
  The 2-player signed and unsigned edge-matching games are PSPACE-complete,
  whether players draw from their own pools of tiles or from a shared pool.
\end{theorem}

\begin{proof}
  We reduce from a version of edge geography.
  For signed edge matching, we reduce from directed edge geography.
  For unsigned edge matching, we reduce from undirected edge geography.
  For players drawing from their own tile pools,
  we reduce from edge-partizan edge geography.
  For players drawing from a shared pool,
  we reduce from impartial (nonpartizan) edge geography.
  All four of these versions of edge geography are PSPACE-complete
  by \cite{DBLP:journals/jcss/Schaefer78,fraenkel1993undirected}
  and Theorem~\ref{thm:edge partizan}.

  In all cases, the reduction creates a single tile for each edge in the graph.
  For a directed edge $(u,v)$, we make a signed tile $\TILE{-u}{U}{+v}{U}$.
  For an undirected edge $\{u,v\}$, we make an unsigned tile
  $\TILE{u}{U}{v}{U}$.  Each $U$ denotes a globally unique color,
  so these tiles can be rotated only by $180^\circ$.
  In the own-pool case, we put the tile in the pool of the player
  that can play the corresponding edge in edge-partizan geography.
  We set the left boundary edge color to $+s$ in the signed case
  and $s$ in the unsigned case, where $s$ is the given start vertex.
  We define the board size $n$ to be the number of tiles
  (the number of edges in the input graph) so that there
  is no additional limit on the number of moves.

  We claim that the resulting 2-player edge matching game faithfully simulates
  the edge geography game.  By the left edge color, the first tile must have
  an edge colored~$s$, and in the signed case, the edge must be colored~$-s$;
  equivalently, the first edge played in geography must be incident to~$s$,
  and in the directed case, it must be an edge outgoing from~$s$.
  In a general move, the rightmost tile's right edge (exposed) color $v$
  represents the vertex $v$ most recently visited by the path, and the current
  player must choose a tile representing an edge incident to or outgoing from
  that vertex, revealing the other endpoint of that edge.
  Because each tile can be played only once, each edge can be played only once
  (edge geography).  The last player to play a tile/edge wins.
\end{proof}

\section*{Acknowledgments}

This work was initiated during open problem solving in the MIT class on
Algorithmic Lower Bounds: Fun with Hardness Proofs (6.892) in Spring 2019.
We thank the other participants of that class for related discussions and
providing an inspiring atmosphere.

\bibliographystyle{alpha}
\bibliography{biblio}

\end{document}